\newtheorem{theorem}{Theorem}
\title{Sequential Randomization Tests Using e-values:\\ Applications for trial monitoring}
\author{
  Fernando G. Zampieri$^{1}$
  \\[1em]
  \small $^1$Department of Critical Care Medicine, University of Alberta, Edmonton, Canada \\
}
\date{May 6, 2026}
\begin{document}

\maketitle

% ============================================
% ABSTRACT
% ============================================
\begin{abstract}
Sequential monitoring of randomized trials traditionally relies on parametric assumptions or asymptotic approximations. We discuss a family of nonparametric sequential tests---collectively called e-RT---for binary, event-only, and continuous endpoints. All active variants derive validity from the randomization mechanism. Using a betting framework, each test constructs a test martingale by sequentially wagering on randomized assignments or observed event labels before using the current label in the wealth update. Under the null hypothesis of no treatment effect, the expected wealth cannot grow, guaranteeing anytime-valid Type I error control regardless of stopping rule. The default e-RT posture is effect-size agnostic: monitoring can begin without specifying a hypothesized treatment effect. Alternatively, fixed design-calibrated wagers, including growth-rate-optimal (GROW) wagers, may be used as optional efficiency tools when a clinically meaningful design alternative is credible. We present simulation studies demonstrating calibration and power, and discuss the principled asymmetry in betting strategies across outcome types. These methods provide a conservative, assumption-light complement to model-based sequential analyses.

\medskip
\noindent\textbf{Keywords:} e-values, e-process, randomization test, sequential analysis, clinical trials
\end{abstract}

% ============================================
% INTRODUCTION
% ============================================
\section{Introduction}

Sequential monitoring of randomized controlled trials requires methods that control Type I errors regardless of when or why the monitoring stops. Traditional group-sequential designs rely on parametric assumptions and predetermined stopping boundaries. When these assumptions fail, or when trials adapt in ways not fully prespecified, validity guarantees may erode.

In acute-care trials, outcome information can accumulate quickly between scheduled interim analyses. Traditional monitoring often uses $\alpha$-spending functions with prespecified looks; evidence that emerges between those looks may not affect trial decisions until the next analysis.

E-values and e-processes offer an alternative framework \citep{shafer2021, vovk2021, ramdas2025}. An e-value is a measure of evidence against a null hypothesis with a specific property: its expected value under the null is at most 1. This simple constraint yields anytime-valid inference: the Type I error guarantee holds at any stopping time, regardless of the stopping rule.

\citet{pmlr-v177-duan22a} introduced interactive rank testing by betting (i-bet), which tests treatment effects by wagering on treatment assignments given observed outcomes. The intuition is discussed by \cite{ramdas2021game}. Under the null hypothesis, randomization ensures that assignments are independent of outcomes, so no betting strategy can systematically accumulate wealth.

\citet{sokolova2026evalues} recently developed a practitioner-oriented framework for e-value monitoring in adaptive clinical trials, including design-calibrated binary e-processes and an open-source implementation in the \texttt{evalinger} package. Their work emphasizes a design-calibrated view of e-process construction: the betting strategy is chosen to optimize expected evidence growth under a prespecified clinically meaningful alternative. This connects to growth-rate-optimal (GROW) wagering in the broader e-value literature \citep{ramdas2025}, and provides the central external contrast for the present work: e-RT is effect-size agnostic by default, while design-calibrated wagers are treated here as optional efficiency tools rather than as prerequisites for monitoring.

We propose e-RT, an e-value randomization-test framework for prospective monitoring of randomized trials. Like i-bet \citep{pmlr-v177-duan22a}, e-RT uses betting martingales for inference, but monitors as patients enroll, requires no covariates or working models for validity, and can learn default wager policies from accumulating data rather than fixing them by a hypothesized effect size. Design-calibrated wagers are added as optional efficiency modes, not as the defining feature of the method.

We describe three active variants: e-RTb for binary event/no-event outcomes; e-RTe for event-only monitoring (requiring no non-event tracking); and e-RTc for continuous endpoints. All active variants share the same validity proof---the expected wealth multiplier is exactly 1 under the null---but differ in how they translate outcome data into wagers. We also distinguish the randomization-based validity engine from the wager policy. The default adaptive policies preserve effect-size agnosticism. Prespecified design alternatives can be used to choose more aggressive wagers when those assumptions are credible, but this is an efficiency choice rather than a validity requirement. This distinction is especially relevant when comparing sparse event-only updates with dense every-patient updates.

\section{Unified validity argument}

Every active e-RT variant follows the same three-step recipe: first observe only pre-reveal information; then choose a predictable wager; then reveal the randomized label and multiply wealth by a fair-payoff factor. The endpoint-specific sections differ in the signal and wager policy, not in the validity argument.

Let $\mathcal{F}_{k-1}$ denote all information revealed after the first $k-1$ betting updates. Before update $k$, the analyst may observe a signal $S_k$ that does not reveal the random label being bet on. Examples include the current binary outcome before revealing treatment assignment, the fact that a new event has occurred before revealing the event arm, or a continuous outcome before revealing assignment. Let
\begin{equation}
\mathcal{G}_k = \sigma(\mathcal{F}_{k-1}, S_k)
\end{equation}
be the pre-reveal information at update $k$. A wager is valid if it is $\mathcal{G}_k$-measurable and the resulting multiplier is nonnegative. In this manuscript, we call such wagers \emph{predictable}: they may depend on prior data, current pre-reveal signals, randomization probabilities, and prespecified design alternatives, but not on the label that determines whether the wager wins.

This reveal order is a mathematical bookkeeping device. In an open-label trial, treatment assignments may be known operationally before outcomes are observed. The requirement is not that the clinical team be blinded; it is that the algorithm used to choose the current wager must be computable without using the current label or sign being tested. Past assignments may be used by adaptive wager policies after their corresponding wealth updates have already occurred.

In all active variants below, the hidden label is $A_k \in \{0,1\}$ with known null probability $\pi_k = P_0(A_k=1 \mid \mathcal{G}_k)$. A betting fraction $\lambda_k \in [0,1]$ gives multiplier
\begin{equation}
M_k =
\begin{cases}
\lambda_k / \pi_k, & A_k=1,\\
(1-\lambda_k)/(1-\pi_k), & A_k=0.
\end{cases}
\label{eq:unified_assignment_multiplier}
\end{equation}

\begin{theorem}[Unified e-RT validity]
If each wager is predictable in the sense above and each multiplier is nonnegative, then the wealth process
\begin{equation}
W_k = \prod_{\ell=1}^{k} M_\ell, \qquad W_0=1,
\end{equation}
is a nonnegative test martingale under the null hypothesis. Consequently, for any stopping time $\tau$ and any $\alpha \in (0,1)$,
\begin{equation}
P_0\left(\sup_{k \geq 1} W_k \geq \frac{1}{\alpha}\right) \leq \alpha.
\end{equation}
\end{theorem}

\begin{proof}
For the assignment-prediction multiplier in Equation~\ref{eq:unified_assignment_multiplier},
\begin{align}
\mathbb{E}_0(M_k \mid \mathcal{G}_k)
&= \pi_k \frac{\lambda_k}{\pi_k}
 + (1-\pi_k)\frac{1-\lambda_k}{1-\pi_k} \\
&= \lambda_k + (1-\lambda_k) = 1.
\end{align}
Therefore,
\begin{equation}
\mathbb{E}_0(W_k \mid \mathcal{G}_k)
= W_{k-1}\mathbb{E}_0(M_k \mid \mathcal{G}_k)
= W_{k-1}.
\end{equation}
Taking conditional expectation again with respect to $\mathcal{F}_{k-1}$ gives $\mathbb{E}_0(W_k \mid \mathcal{F}_{k-1})=W_{k-1}$, so $(W_k)$ is a martingale. Nonnegativity follows from the multiplier constraint. Ville's inequality for nonnegative martingales \citep{ville1939} then gives the anytime-valid Type I error bound.
\end{proof}

This theorem is deliberately agnostic about how the wager is selected. An adaptive wager is valid if it is computed from past data and current pre-reveal information. A fixed or design-calibrated wager is valid if it is prespecified from protocol quantities or a design alternative before the trial begins, and then applied without looking at the current hidden label. Misspecifying the design alternative can reduce power, delay crossing, or inflate the apparent effect among crossing trials, but it does not change the conditional expectation calculation above. Validity is a property of randomization, exchangeability, predictability, and nonnegative fair multipliers; power is a property of how well the wager matches the alternative. The stopping-time guarantee covers stopping rules based on the revealed monitoring history and other predictable trial information. Response-adaptive randomization can also be accommodated if the conditional randomization probabilities used in the multiplier are known at each update.

Table~\ref{tab:variant_selection} summarizes the practical endpoint choice before the variants are developed in detail.

\begin{table}[htbp]
\centering
\small
\caption{Endpoint-oriented choice among active e-RT variants.}
\label{tab:variant_selection}
\begin{tabular}{p{0.27\textwidth}p{0.17\textwidth}p{0.46\textwidth}}
\toprule
Trial setting & Suggested variant & Main planning notes \\
\midrule
Binary endpoint with follow-up for events and non-events & e-RTb & Default binary monitor. Adaptive half-Kelly is effect-size agnostic; fixed or design-calibrated wagers are optional when a credible risk difference is available. \\
Reliable event stream but incomplete non-event ascertainment & e-RTe & Useful when events are captured well and non-events are costly to verify. Requires enough expected events for burn-in and ramp; most attractive at low-to-moderate baseline risk with modest ARR. \\
Continuous endpoint & e-RTc & Adaptive sign-direction wagering is assumption-light but conservative for small effects. A normal-shift design wager can improve power when the design effect and scale are credible. \\
\bottomrule
\end{tabular}
\end{table}

\section{e-RT binary (e-RTb)}

Binary endpoints are common in clinical trials, with mortality, clinical deterioration, infection, treatment failure, or response status often recorded as event/no-event outcomes. The binary e-RT, abbreviated \textbf{e-RTb}, is the simplest member of the family and serves as the prototype for the later variants.

\subsection{Setup}

Consider a sequential randomized trial with 1:1 allocation. At each enrollment $i = 1, 2, \ldots$, we observe:
\begin{itemize}
    \item $T_i \in \{0, 1\}$: treatment assignment (0 = control, 1 = intervention)
    \item $Y_i \in \{0, 1\}$: binary outcome (0 = no event, 1 = event)
\end{itemize}
Treatment is assigned with known probability $p = P(T_i = 1)$, typically $p = 0.5$.

The null hypothesis is that treatment assignment has no effect on outcome:
\begin{equation}
H_0: Y_i \perp T_i \text{ for all } i
\end{equation}
Under this hypothesis, observing the outcome provides no information about which arm the patient was assigned to.

\subsection{Wealth Process}

Following \citet{pmlr-v177-duan22a}, we construct a wealth process by wagering on treatment assignments. After observing outcome $Y_i$ but \emph{before} learning treatment assignment $T_i$, we choose $\lambda_i \in [0, 1]$: the fraction wagered on intervention.

The wealth updates as:
\begin{equation}
W_i = W_{i-1} \times
\begin{cases}
\lambda_i / p & \text{if } T_i = 1 \\
(1 - \lambda_i) / (1-p) & \text{if } T_i = 0
\end{cases}
\end{equation}
starting from $W_0 = 1$. When we bet toward the correct arm, wealth grows; when wrong, it shrinks.

\subsection{Betting Strategy}

The wager is chosen before the treatment assignment for patient $i$ is used in the wealth update. It may depend on the current outcome $Y_i$ and on prior patients' outcomes and assignments, but not on $T_i$ itself. The default strategy learns the treatment effect from accumulating data.

Let:
\begin{equation}
\hat{\delta}_{i-1} = \text{(event rate in intervention)} - \text{(event rate in control)}
\end{equation}
estimated from patients $1, \ldots, i-1$. The betting fraction is:
\begin{equation}
\lambda_i =
\begin{cases}
0.5 + 0.5 \cdot c_i \cdot \hat{\delta}_{i-1} & \text{if } Y_i = 1 \\
0.5 - 0.5 \cdot c_i \cdot \hat{\delta}_{i-1} & \text{if } Y_i = 0
\end{cases}
\end{equation}
where $c_i \in [0, 1]$ ramps from 0 to 1 over a burn-in period:
\begin{equation}
c_i = \min\left(1, \max\left(0, \frac{i - n_0}{n_r}\right)\right)
\end{equation}
with $n_0$ the burn-in period and $n_r$ the ramp period. This prevents large bets when $\hat{\delta}_{i-1}$ is unstable due to small samples.

The logic: if $\hat{\delta} > 0$ (more events in intervention), then events suggest intervention and non-events suggest control. If $\hat{\delta} < 0$ (fewer events in intervention), then events suggest control and non-events suggest intervention. The factor of $0.5$ before $c_i \cdot \hat{\delta}_{i-1}$ ensures $\lambda_i \in [0, 1]$.

\subsection{Worked Example}

Consider a trial comparing intervention versus control, with a binary outcome (event or no event). Event is mortality, which is expected to be lower with intervention. Allocation is 1:1 ($p = 0.5$). Assume burn-in is complete ($c_i = 1$).

We look back at patients 1--199. Intervention arm has 100 patients, 35 events, so rate = 35.0\%. Control arm has 99 patients, 40 events, so rate = 40.4\%. $\hat{\delta}_{199} = 0.350 - 0.404 = -0.054$ (intervention looks protective).

Patient 200 has an event (dies). Where is this patient likely from? Events are more common in control (40.4\% vs 35.0\%), so probably control. We bet $\lambda = 0.5 + 0.5 \times (-0.054) = 0.473$ on intervention and $1 - \lambda = 0.527$ on control. Assignment revealed: control. We guessed right. Multiplier: $0.527 / 0.5 = 1.054$. Wealth grows 5.4\%.

Patient 201: Update counts---intervention has 100 patients, 35 events (35.0\%); control has 100 patients, 41 events (41.0\%). $\hat{\delta}_{200} = -0.060$. Patient 201 has no event. Non-events are more common in intervention (65.0\% vs 59.0\%), so probably intervention. Bet: $\lambda = 0.5 - 0.5 \times (-0.060) = 0.530$. Assignment revealed: intervention. Multiplier: $0.530 / 0.5 = 1.060$. Wealth grows 6.0\%.

Patient 202: Update counts---intervention has 101 patients, 35 events (34.7\%); control has 100 patients, 41 events (41.0\%). $\hat{\delta}_{201} = -0.063$. Patient 202 has an event. Bet toward control: $\lambda = 0.5 + 0.5 \times (-0.063) = 0.469$. Assignment revealed: intervention. Wrong guess. Multiplier: $0.469 / 0.5 = 0.938$. Wealth \textbf{shrinks} 6.2\%.

Cumulative wealth:
\begin{equation}
W_{202} = W_{199} \times 1.054 \times 1.060 \times 0.938 = W_{199} \times 1.048
\end{equation}

Despite one wrong guess, wealth grew 4.8\% over these three patients. Under the alternative, correct guesses outnumber incorrect ones on average and wealth grows. Under the null, right and wrong guesses balance out and wealth fluctuates around 1.

\subsection{Validity}

The binary case is the simplest concrete instance of the unified validity argument. Here the hidden label is the current treatment assignment, $A_i=T_i$, and the null probability is $\pi_i=p$. Under the null, outcome and treatment are independent, so after observing $Y_i$ and all prior data, the current treatment assignment still has probability $P(T_i=1)=p$. For any predictable wager $\lambda_i$,
\begin{align}
\mathbb{E}[\text{multiplier} \mid \mathcal{G}_i] &= p \times \frac{\lambda_i}{p} + (1-p) \times \frac{1 - \lambda_i}{1-p} \\
&= \lambda_i + (1 - \lambda_i) = 1
\end{align}
Thus $(W_i)$ is a nonnegative martingale under the null, and Ville's inequality gives anytime-valid Type I error control when rejecting at threshold $1/\alpha$.

\subsection{Simulation studies}

We evaluated operating characteristics of e-RTb by simulation. For each scenario, we calculated the sample size required for a chi-square test to achieve the target power at $\alpha = 0.05$, then ran 5{,}000 simulated trials at that sample size. We used burn-in = 50 patients and ramp = 100 patients. Control arm event rate was 40\% in all scenarios.

\subsubsection{Results}

Table~\ref{tab:simulations} presents Type I error and power for trials designed to detect 5\% or 10\% absolute risk reductions (ARR) with 80\% or 90\% power.

\begin{table}[htbp]
\centering
\caption{Operating characteristics for adaptive e-RTb. Each row summarizes 5{,}000 fixed-seed simulated trials under the null and 5{,}000 under the matched alternative.}
\begin{tabular}{@{}cccccc@{}}
\toprule
ARR & Target Power & $N$ & Type I Error & e-RTb Power & Median Crossing \\
\midrule
5\% & 80\% & 2{,}942 & 0.031 & 47.5\% & 1{,}450 (49\%) \\
10\% & 80\% & 712 & 0.021 & 49.5\% & 401 (56\%) \\
5\% & 90\% & 3{,}938 & 0.035 & 63.6\% & 1{,}837 (47\%) \\
10\% & 90\% & 954 & 0.025 & 64.9\% & 479 (50\%) \\
\bottomrule
\end{tabular}
\label{tab:simulations}
\end{table}

Type I error was controlled below the nominal $\alpha = 0.05$ level across the tested scenarios, consistent with the martingale guarantee. Power was lower than the corresponding fixed-sample design power because the e-process must cross an anytime-valid threshold. When the process rejected the null, it generally did so around the middle of planned enrollment.

\subsubsection{Trajectory examples}

Figure~\ref{fig:null} shows representative null trajectories. Wealth fluctuates near 1 and no path crosses the threshold in these panels.

\begin{figure}[htbp]
\centering
\includegraphics[width=0.48\textwidth]{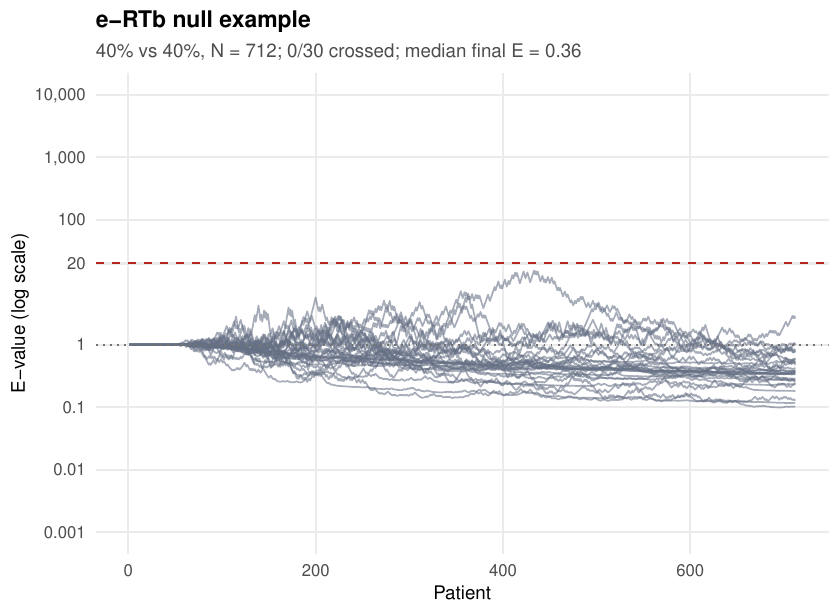}
\includegraphics[width=0.48\textwidth]{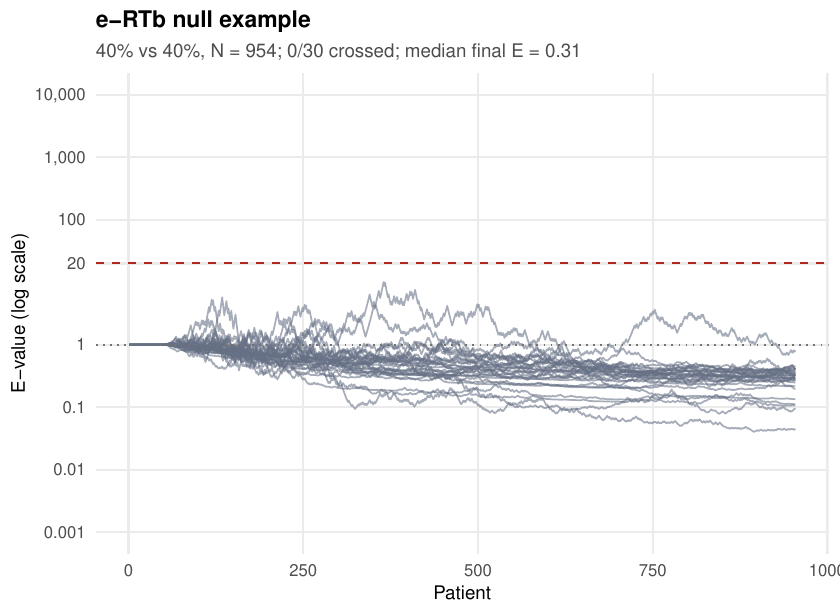}
\caption{Wealth trajectories under the null hypothesis. Left: n = 712 (80\% power design). Right: n = 954 (90\% power design). Dashed red line: rejection threshold ($1/\alpha = 20$). Dotted gray line: neutral (wealth = 1). Under the null, no trajectory crosses the threshold.}
\label{fig:null}
\end{figure}

Figure~\ref{fig:alt} shows representative trajectories under a 10pp ARR. In these panels, 15 of 30 paths cross in the 80\% power design and 20 of 30 cross in the 90\% power design, usually before enrollment completes.

\begin{figure}[htbp]
\centering
\includegraphics[width=0.48\textwidth]{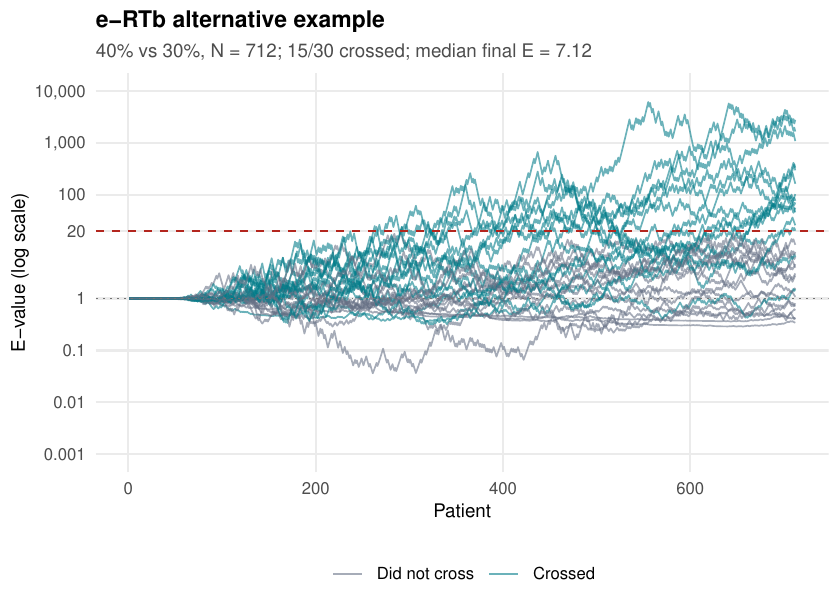}
\includegraphics[width=0.48\textwidth]{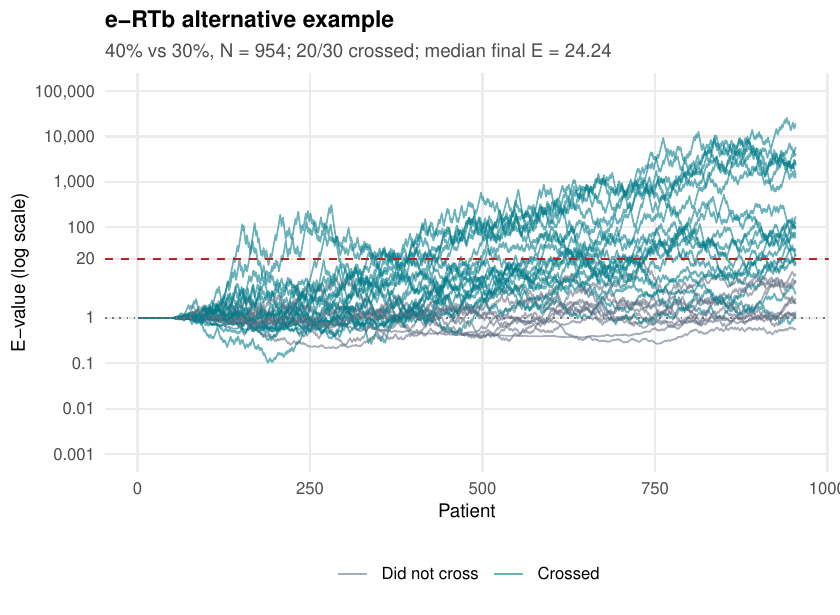}
\caption{E-process trajectories under the alternative hypothesis (true ARR = 10\%). Left: n = 712 (80\% power design). Right: n = 954 (90\% power design). Dashed red line: rejection threshold ($1/\alpha = 20$). Under the alternative, approximately half to two thirds of the representative trajectories cross the threshold, typically around the midpoint of enrollment.}
\label{fig:alt}
\end{figure}

\section{Event-Only Monitoring (e-RTe)}

\subsection{Motivation}

The e-RTb requires knowing both the treatment arm and the outcome for every enrolled patient. In practice, this means a coordinator must ascertain whether each patient experienced the event or not---requiring follow-up, data entry, and outcome adjudication for all patients, including the majority who do not experience the event.

In some settings, events of interest are reliably captured but non-events require active follow-up. Examples include mortality in electronic medical records, oncology progression, or cardiovascular events. This motivates a simpler variant: \textbf{e-RTe monitors only the stream of events, ignoring non-events entirely.}

\subsection{Null Hypothesis}

Let $A_i$ denote the arm label of the $i$th observed event, with $A_i=1$ for treatment and $A_i=0$ for control. Let $\mathcal{H}^e_i$ be the event-monitoring history just before the arm of event $i$ is revealed, including the fact that event $i$ has been observed but not its arm. The simple e-RTe version used here assumes that, under the null, the observed event labels are conditionally exchangeable between arms. In a 1:1 randomized trial this means
\begin{equation}
H_0: P(A_i=1 \mid \mathcal{H}^e_i) = 0.5 .
\end{equation}
This condition requires the observed event stream itself to be arm-exchangeable under the null: for example, one relevant event per patient, arm-exchangeable event opportunity, and arm-exchangeable event ascertainment. Equal enrollment alone is not sufficient if follow-up, detection, censoring, or stratum-specific allocation makes the next observed event more likely to come from one arm under the null.

Thus e-RTe reduces monitoring to a sequential Bernoulli test of the ``event coin'': among observed events, is the treatment-arm fraction equal to $0.5$? More generally, if the null event-label probability is a known predictable value $\pi_i$, the same construction remains valid after replacing $0.5$ by $\pi_i$ in the wealth multiplier.

\subsection{Algorithm}

The algorithm maintains two counters: $d_{\text{trt}}$ and $d_{\text{ctrl}}$, both initialized to zero. For each event $i = 1, 2, \ldots$:

\textbf{Step 1: Estimate the coin bias from past data.} Using all events before the current one, compute the plug-in estimator:
\begin{equation}
\hat{p}_{i-1} = \frac{d_{\text{trt}}}{d_{\text{trt}} + d_{\text{ctrl}}}
\end{equation}
If no events have been observed yet, set $\hat{p} = 0.5$. Under the null, this running treatment-event fraction should hover around $0.5$; if treatment is protective, it should drift below $0.5$.

\textbf{Step 2: Compute the wager.} The betting fraction is:
\begin{equation}
\lambda_i = 0.5 + c_i \cdot (\hat{p}_{i-1} - 0.5)
\end{equation}
clamped to $[0.001, 0.999]$, where $c_i$ is the same ramp function as before:
\begin{equation}
c_i = \min\left(1, \max\left(0, \frac{i - n_0}{n_r}\right)\right)
\end{equation}
with burn-in $n_0 = 30$ events and ramp $n_r = 50$ events. During burn-in the wager is neutral; after the ramp completes, $\lambda_i = \hat{p}_{i-1}$. For example, if 40\% of past events came from treatment, the next wager places 40\% on ``treatment event'' and 60\% on ``control event.'' At full ramp this is the full Kelly event-coin wager. It is more aggressive than the half-Kelly default in e-RTb because e-RTe compounds only at events rather than at every randomized patient.

\textbf{Step 3: Update wealth.}
\begin{equation}
W_i = W_{i-1} \times
\begin{cases}
\lambda_i / 0.5 & \text{if the event is from the treatment arm} \\
(1 - \lambda_i) / 0.5 & \text{if the event is from the control arm}
\end{cases}
\end{equation}
starting from $W_0 = 1$.

This is the fair $2:1$ payout for a $0.5$-probability event. Wealth grows when the wager leans toward the observed arm and shrinks otherwise.
With a known predictable null event-label probability $\pi_i$, the fair multipliers are $\lambda_i/\pi_i$ and $(1-\lambda_i)/(1-\pi_i)$ instead.

\textbf{Step 4: Update counters after betting.} Increment $d_{\text{trt}}$ or $d_{\text{ctrl}}$ depending on which arm the event came from. This ordering---bet first, then update---ensures that the estimate $\hat{p}_{i-1}$ uses only past information, maintaining the martingale property.

\subsection{Worked example}

Consider a trial comparing a new treatment against standard of care in the ICU, where the event of interest is mortality. Suppose 80 deaths have been observed so far. Burn-in (30) and ramp (50) are complete, so $c_i = 1$. Current counts: $d_{\text{trt}} = 33$, $d_{\text{ctrl}} = 47$.

Event 81 arrives. Let us walk through the update.

\textbf{Step 1:} $\hat{p}_{80} = 33 / (33 + 47) = 33/80 = 0.4125$. Interpretation: so far, 41.25\% of events came from the treatment arm. This is below 50\%, suggesting treatment may be protective.

\textbf{Step 2:} $\lambda_{81} = 0.5 + 1.0 \times (0.4125 - 0.5) = 0.4125$. We place 41.25\% of our bet on ``treatment event'' and 58.75\% on ``control event.'' We are leaning toward control because past evidence suggests treatment events are less frequent.

\textbf{Step 3:} The arm is revealed: it is a \textbf{control} event.
\begin{equation}
\text{Multiplier} = \frac{1 - 0.4125}{0.5} = \frac{0.5875}{0.5} = 1.175
\end{equation}
Wealth grows by 17.5\%. Our bet was correct---we leaned toward control, and indeed it was a control event.

\textbf{Step 4:} Update $d_{\text{ctrl}} = 48$. Now $\hat{p}_{81} = 33/81 = 0.407$.

Had the event been from the treatment arm instead:
\begin{equation}
\text{Multiplier} = \frac{0.4125}{0.5} = 0.825
\end{equation}
Wealth would have shrunk by 17.5\%. Our lean toward control would have been wrong.

Under the null, treatment and control events arrive with equal probability, so wins and losses balance on average. Under the alternative ($\hat{p} < 0.5$), control events are genuinely more frequent, and wealth grows systematically.

\subsection{Validity}

\begin{theorem}
If the observed event labels are conditionally exchangeable under the null and each wager is chosen before revealing the current event arm, then the e-RTe wealth process $(W_i)$ is a nonnegative martingale.
\end{theorem}

\begin{proof}
Condition on $\mathcal{H}^e_i$ and the current wager $\lambda_i$. Under the null used here, the probability that event $i$ is from the treatment arm is $0.5$. The expected wealth multiplier is:
\begin{align}
\mathbb{E}\left[\frac{W_i}{W_{i-1}} \,\middle|\, \lambda_i\right] &= 0.5 \times \frac{\lambda_i}{0.5} + 0.5 \times \frac{1 - \lambda_i}{0.5} \\
&= \lambda_i + (1 - \lambda_i) = 1
\end{align}
This is the same identity as in the binary case. Since the expected multiplier is exactly 1, wealth cannot systematically grow under the null, regardless of how $\lambda_i$ was chosen (as long as it depends only on past events).
\end{proof}

By Ville's inequality, $\Pr_{H_0}(\sup_{i \geq 1} W_i \geq 1/\alpha) \leq \alpha$. Rejecting when wealth crosses $1/\alpha$ controls Type I error at any stopping time.

\subsection{Adaptive bidirectionality}

The adaptive e-RTe wager automatically detects both benefit and harm without pre-specifying a direction:
\begin{itemize}
    \item If $\hat{p} < 0.5$ (fewer treatment events than expected): the method bets on control events being more common, and wealth grows when treatment is indeed protective.
    \item If $\hat{p} > 0.5$ (more treatment events than expected): the method bets on treatment events being more common, and wealth grows when treatment is harmful.
\end{itemize}

No investigator input about the expected direction is needed. The adaptive wager discovers the direction from the data. The same principle also applies to adaptive e-RTb: if events become more common in treatment, events point toward treatment; if events become less common in treatment, events point toward control. This bidirectionality is a property of adaptive wagers. A design-fixed wager is instead directional unless a separate two-sided or mirrored design rule is specified.

\subsection{Signal Concentration}

A key property of e-RTe is that it can outperform the full-sample e-RTb when the baseline event rate is low. The intuition is that events \emph{concentrate} the treatment signal.

Consider a trial where the control event rate is 25\% and the treatment event rate is 20\%, yielding a 5 percentage-point absolute risk reduction (ARR). In e-RTb, the signal per patient is diluted: most patients do not experience the event, and only the 20--25\% who do carry information about differential event rates. The observed risk difference across all patients is 5 percentage points.

In e-RTe, only events are observed. The event-coin probability is:
\begin{equation}
p_{\text{alt}} = \frac{p_{\text{trt}}}{p_{\text{trt}} + p_{\text{ctrl}}} = \frac{0.20}{0.20 + 0.25} = 0.444
\end{equation}
This is an 11.2-point tilt from 0.5---more than double the 5-point ARR. The signal is concentrated because events filter out the uninformative non-events.

This advantage diminishes as the baseline event rate increases. At higher event rates, e-RTb sees more informative events per patient, and the event-coin tilt shrinks because both numerator and denominator grow:

\begin{table}[htbp]
\centering
\caption{Signal concentration: event-coin tilt versus ARR for a 5pp risk reduction.}
\begin{tabular}{@{}ccccc@{}}
\toprule
Baseline Event Rate & Treatment Event Rate & Event Coin $p_{\text{alt}}$ & Tilt from 0.5 & Tilt / ARR \\
\midrule
10\% & 5\% & 0.333 & 16.7 pp & 3.33$\times$ \\
15\% & 10\% & 0.400 & 10.0 pp & 2.00$\times$ \\
20\% & 15\% & 0.429 & 7.1 pp & 1.43$\times$ \\
25\% & 20\% & 0.444 & 5.6 pp & 1.11$\times$ \\
30\% & 25\% & 0.455 & 4.5 pp & 0.91$\times$ \\
35\% & 30\% & 0.462 & 3.8 pp & 0.77$\times$ \\
40\% & 35\% & 0.467 & 3.3 pp & 0.67$\times$ \\
\bottomrule
\end{tabular}
\label{tab:signal_concentration}
\end{table}

The analytical tilt calculation places the crossover with e-RTb near 25\% baseline event rate. The finite-sample head-to-head simulation below remains slightly e-RTe-favorable at 25\% and switches to e-RTb by 30\%. Below that region, the concentrated event-coin signal more than compensates for the smaller number of observations; above it, e-RTb's access to all patients provides the advantage.

\subsection{Trajectory Examples}

Figure~\ref{fig:erte} shows representative e-RTe trajectories. In this 500-event alternative illustration, 12 of 30 paths cross the threshold as the adaptive wager learns the event-coin imbalance.

\begin{figure}[htbp]
\centering
\includegraphics[width=0.48\textwidth]{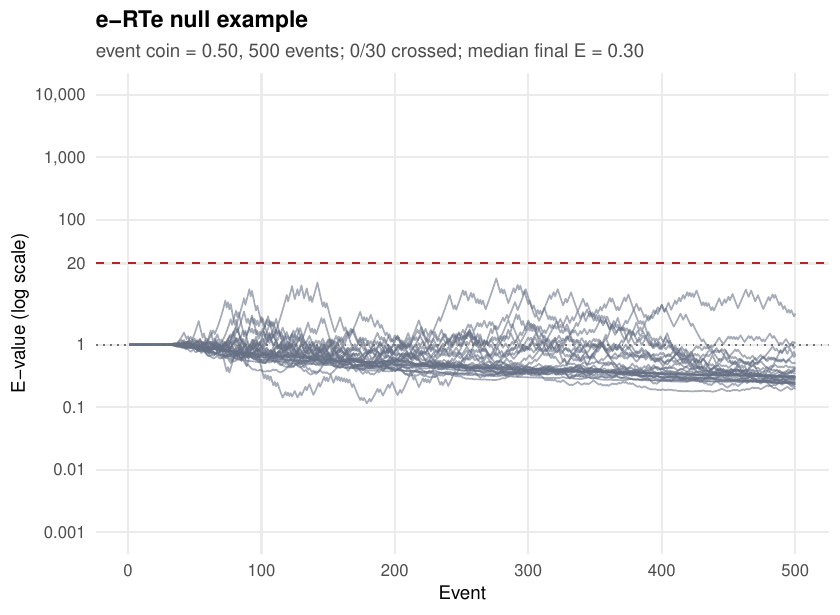}
\includegraphics[width=0.48\textwidth]{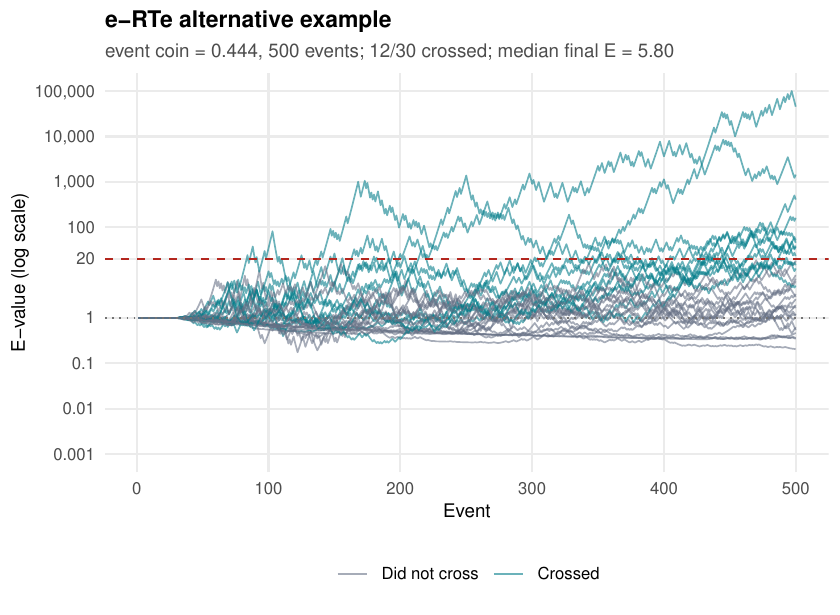}
\caption{
Trajectories of the e-RTe process (25\% baseline event rate, 5pp ARR, 500 events).
Left: under the null hypothesis (event coin $= 0.50$), wealth fluctuates randomly.
Right: under the alternative hypothesis, some paths cross as the adaptive wager learns the event-coin imbalance.
Dashed red line: rejection threshold ($1/\alpha = 20$).
}
\label{fig:erte}
\end{figure}

\subsection{Head-to-head comparison with e-RTb}

To quantify the signal concentration crossover, we ran both e-RTe and e-RTb on the \emph{same} simulated trials across a range of baseline event rates and absolute risk reductions (ARRs). For each scenario, we computed the frequentist sample size (two-proportion $z$-test, 80\% power), enrolled that many patients, and analyzed the data with both methods: e-RTb processed all patients; e-RTe processed only the event stream. We used 2{,}000 simulations per scenario. Scenarios that would imply a treatment event rate below 5\% were omitted.

\begin{table}[htbp]
\centering
\small
\caption{Head-to-head comparison: e-RTe versus e-RTb across absolute risk reductions, using the same trial data and same enrolled-patient sample size. Scenarios requiring treatment event rates below 5\% are omitted.}
\begin{tabular}{@{}ccccccccl@{}}
\toprule
ARR & Baseline & Event Coin & $N$ & Events & e-RTb Power & e-RTe Power & $\Delta$ & Winner \\
\midrule
5.0 pp & 10\% & 0.333 & 870 & 66 & 12.3\% & 8.1\% & $-4.2$pp & e-RTb \\
5.0 pp & 15\% & 0.400 & 1{,}372 & 172 & 24.8\% & 43.7\% & $+18.9$pp & e-RTe \\
5.0 pp & 20\% & 0.429 & 1{,}812 & 318 & 33.1\% & 44.7\% & $+11.6$pp & e-RTe \\
5.0 pp & 25\% & 0.444 & 2{,}188 & 493 & 40.0\% & 42.8\% & $+2.8$pp & e-RTe \\
5.0 pp & 30\% & 0.455 & 2{,}502 & 689 & 42.5\% & 37.4\% & $-5.1$pp & e-RTb \\
5.0 pp & 35\% & 0.462 & 2{,}754 & 896 & 46.0\% & 34.2\% & $-11.8$pp & e-RTb \\
5.0 pp & 40\% & 0.467 & 2{,}942 & 1{,}104 & 48.9\% & 32.1\% & $-16.8$pp & e-RTb \\
7.5 pp & 15\% & 0.333 & 556 & 63 & 19.7\% & 4.9\% & $-14.9$pp & e-RTb \\
7.5 pp & 20\% & 0.385 & 758 & 124 & 30.4\% & 35.5\% & $+5.1$pp & e-RTe \\
7.5 pp & 25\% & 0.412 & 932 & 199 & 40.5\% & 41.5\% & $+1.0$pp & e-RTe \\
7.5 pp & 30\% & 0.429 & 1{,}080 & 284 & 44.4\% & 37.0\% & $-7.4$pp & e-RTb \\
7.5 pp & 35\% & 0.440 & 1{,}198 & 375 & 49.0\% & 35.4\% & $-13.6$pp & e-RTb \\
7.5 pp & 40\% & 0.448 & 1{,}288 & 467 & 50.2\% & 33.4\% & $-16.9$pp & e-RTb \\
10.0 pp & 15\% & 0.250 & 282 & 29 & 10.8\% & 0.0\% & $-10.8$pp & e-RTb \\
10.0 pp & 20\% & 0.333 & 398 & 60 & 25.4\% & 2.8\% & $-22.6$pp & e-RTb \\
10.0 pp & 25\% & 0.375 & 500 & 100 & 33.1\% & 26.1\% & $-7.1$pp & e-RTb \\
10.0 pp & 30\% & 0.400 & 588 & 148 & 42.5\% & 33.3\% & $-9.2$pp & e-RTb \\
10.0 pp & 35\% & 0.417 & 658 & 198 & 46.2\% & 33.5\% & $-12.7$pp & e-RTb \\
10.0 pp & 40\% & 0.429 & 712 & 250 & 50.7\% & 34.2\% & $-16.4$pp & e-RTb \\
\bottomrule
\end{tabular}
\label{tab:compare_erte_binary}
\end{table}

For a 5pp ARR, the crossover occurs near 25\% baseline event rate, consistent with the analytical prediction from Table~\ref{tab:signal_concentration}. At 15--20\% baseline, e-RTe outperforms e-RTb despite seeing fewer observations, because the event-coin tilt more than compensates for the smaller event stream. Above 30\%, e-RTb's access to all patients provides an increasingly large advantage. For larger ARRs, the frequentist sample size shrinks and the event stream may become too short for e-RTe to complete its burn-in and ramp. In those scenarios, e-RTb usually dominates because the larger per-patient effect can be exploited immediately across all randomized patients.

\paragraph{Low-event constraints.} At 10\% baseline with a 5pp ARR, the frequentist sample size produces only about 66 expected events---fewer than the burn-in (30) plus ramp (50) = 80 events required for e-RTe to reach full betting strength. The process cannot fully leverage the strong event-coin tilt unless the planned event stream is long enough.

\paragraph{Larger effects.} With larger effects (7.5pp or 10pp ARR), the event-coin tilt is stronger but the planned sample size is smaller, so e-RTe may have too few events to learn before the trial ends.

\paragraph{Practical guidance.} Use e-RTe when (i) the baseline event rate is low-to-moderate, (ii) the expected ARR is modest enough that the planned sample still yields a sufficiently long event stream, and (iii) ascertainment of non-events is impractical. When baseline event rates are high or effects are large enough to make the planned trial short, e-RTb is usually more powerful.

\subsection{Burn-in, ramp, and Kelly-intensity sensitivity}

The previous comparison used the default adaptive settings: e-RTb used a 50-patient burn-in, 100-patient ramp, and half-intensity adaptive wager; e-RTe used a 30-event burn-in, 50-event ramp, and full-intensity adaptive wager. These defaults are starting points. Because e-RTe updates only at events, a fixed 30/50 event schedule may be too slow when the planned trial yields few events. Conversely, overly aggressive e-RTb betting can degrade wealth through many patient-level updates.

We therefore repeated the same-N comparison over a tuning grid. For each endpoint, we crossed fixed 10/20, fixed 30/50, default, proportional 5/10\%, and proportional 10/20\% burn-in/ramp schedules with 25\%, 50\%, 75\%, and 100\% adaptive Kelly intensity. Proportional schedules were defined on the natural update scale: planned patients for e-RTb and expected events for e-RTe. This is a sensitivity analysis rather than an optimized rule; choosing the best row after seeing trial data would not be a prespecified monitoring procedure. The resulting operating characteristics are summarized in Table~\ref{tab:erte_tuning_sensitivity}.

\begin{table}[htbp]
\centering
\scriptsize
\setlength{\tabcolsep}{3pt}
\caption{Sensitivity of adaptive e-RTb and e-RTe to burn-in/ramp schedule and Kelly intensity. Each row uses the same trial-data scenarios as Table~\ref{tab:compare_erte_binary}; values summarize 1{,}000 fixed-seed simulations. Tuned columns show the best power over fixed 10/20-event, fixed 30/50-event, default, proportional 5/10\%, and proportional 10/20\% burn-in/ramp schedules crossed with 25\%, 50\%, 75\%, and 100\% Kelly intensity.}
\begin{tabular}{@{}cccccccl@{}}
\toprule
ARR & Baseline & Events & e-RTb Default & e-RTb Tuned & e-RTe Default & e-RTe Tuned & Winner \\
\midrule
5.0 pp & 10\% & 66 & 12.2\% & 19.6\% (30/50 events, 100\%K) & 6.4\% & 48.3\% (5/10\%, 100\%K) & e-RTe \\
5.0 pp & 15\% & 172 & 25.2\% & 27.7\% (5/10\%, 75\%K) & 42.0\% & 48.3\% (5/10\%, 100\%K) & e-RTe \\
5.0 pp & 20\% & 318 & 32.9\% & 36.4\% (10/20\%, 75\%K) & 43.1\% & 43.1\% (default, 100\%K) & e-RTe \\
5.0 pp & 25\% & 493 & 40.9\% & 41.2\% (10/20\%, 75\%K) & 42.6\% & 42.6\% (default, 100\%K) & e-RTe \\
5.0 pp & 30\% & 689 & 43.8\% & 46.9\% (10/20\%, 75\%K) & 37.2\% & 38.0\% (5/10\%, 100\%K) & e-RTb \\
5.0 pp & 35\% & 896 & 47.9\% & 51.7\% (10/20\%, 75\%K) & 36.1\% & 38.8\% (5/10\%, 100\%K) & e-RTb \\
5.0 pp & 40\% & 1{,}104 & 46.4\% & 51.0\% (5/10\%, 50\%K) & 29.3\% & 30.1\% (5/10\%, 100\%K) & e-RTb \\
7.5 pp & 15\% & 63 & 20.6\% & 27.0\% (default, 75\%K) & 6.6\% & 46.6\% (10/20\%, 100\%K) & e-RTe \\
7.5 pp & 20\% & 124 & 31.6\% & 34.0\% (10/20\%, 75\%K) & 32.5\% & 46.1\% (10/20 events, 100\%K) & e-RTe \\
7.5 pp & 25\% & 199 & 37.9\% & 40.0\% (10/20\%, 75\%K) & 38.5\% & 40.2\% (10/20\%, 100\%K) & $\sim$Tied \\
7.5 pp & 30\% & 284 & 42.3\% & 44.2\% (10/20\%, 75\%K) & 38.6\% & 38.9\% (10/20\%, 100\%K) & e-RTb \\
7.5 pp & 35\% & 375 & 46.6\% & 46.6\% (default, 50\%K) & 35.3\% & 36.1\% (5/10\%, 100\%K) & e-RTb \\
7.5 pp & 40\% & 467 & 51.8\% & 53.1\% (10/20\%, 75\%K) & 35.9\% & 36.6\% (10/20\%, 100\%K) & e-RTb \\
10.0 pp & 15\% & 29 & 13.5\% & 25.5\% (default, 100\%K) & 0.0\% & 46.8\% (5/10\%, 75\%K) & e-RTe \\
10.0 pp & 20\% & 60 & 22.2\% & 31.0\% (default, 100\%K) & 2.7\% & 41.9\% (10/20\%, 100\%K) & e-RTe \\
10.0 pp & 25\% & 100 & 37.1\% & 41.4\% (default, 75\%K) & 25.2\% & 43.4\% (10/20 events, 100\%K) & e-RTe \\
10.0 pp & 30\% & 148 & 42.4\% & 42.6\% (10/20\%, 75\%K) & 37.5\% & 43.9\% (10/20 events, 100\%K) & e-RTe \\
10.0 pp & 35\% & 198 & 45.9\% & 48.2\% (10/20\%, 75\%K) & 34.6\% & 36.7\% (10/20 events, 100\%K) & e-RTb \\
10.0 pp & 40\% & 250 & 51.3\% & 52.2\% (10/20\%, 75\%K) & 33.6\% & 34.0\% (10/20\%, 100\%K) & e-RTb \\
\bottomrule
\end{tabular}
\label{tab:erte_tuning_sensitivity}
\end{table}

The sensitivity analysis confirms that e-RTe performance is partly a tuning issue. Shorter or proportional ramps can recover power when the default 30/50 event schedule consumes too much of the event stream. This is design sensitivity, not a data-adaptive tuning rule. As event streams become longer and baseline risk becomes higher, e-RTb often regains the advantage because it updates for every randomized patient and uses non-event information.

\section{Design-Calibrated Wagers for Binary and Event-Only Endpoints}

The preceding simulations used the default wager policies: adaptive half-Kelly for e-RTb and adaptive full-Kelly for e-RTe. We explicitly separate the validity engine from the wager policy. The martingale argument requires only that the wager be predictable: it may be learned from prior trial data, or it may be fixed in advance from the design alternative. This distinction parallels \citet{sokolova2026evalues}, where the e-process is calibrated to a clinically meaningful design effect.

A growth-rate-optimal (GROW) wager \citep{ramdas2025} is the value $\lambda^\star$ that maximizes the expected log-growth of the e-process under a specified design alternative:
\begin{equation}
\lambda^\star = \arg\max_\lambda \; \mathbb{E}_{\text{design}}\{\log M(\lambda)\},
\end{equation}
where $M(\lambda)$ is the one-step e-process multiplier. In the binary paired-comparison construction of \citet{sokolova2026evalues}, this wager is computed from the design alternative before monitoring and then held fixed during the e-value run. This is analogous to a frequentist design effect used for sample-size planning: it improves efficiency when the design alternative is close to the truth, but can lose power when the effect is misspecified. The design-fixed e-RT policies below adopt the same planning philosophy, while retaining the e-RT assignment-prediction construction rather than the paired-comparison construction.

For e-RTb, the design-fixed wager uses the posterior assignment probabilities implied by prespecified event rates. With 1:1 randomization and design rates $p_T$ and $p_C$,
\begin{equation}
\lambda_{\text{event}} =
\Pr(T = 1 \mid Y = 1)
= \frac{p_T}{p_T + p_C},
\qquad
\lambda_{\text{non-event}} =
\Pr(T = 1 \mid Y = 0)
= \frac{1 - p_T}{(1 - p_T) + (1 - p_C)}.
\end{equation}
For e-RTe, the design-fixed wager is the corresponding event-coin probability, $\Pr(T = 1 \mid \text{event}) = p_T/(p_T + p_C)$. Thus e-RTb fixes both an event and non-event wager, whereas e-RTe fixes only the event wager.

We ran 5{,}000 simulations per scenario using a fixed seed. Sample sizes were anchored to the usual fixed-sample two-proportion calculation in R (\texttt{power.prop.test}, 80\% power, $\alpha = 0.05$). Both e-RTb and e-RTe were evaluated at the same enrolled-patient $N$; no event-only inflation was used in this comparison. Under the null, we reused the design $N$ for 5pp and 10pp ARR alternatives. Under the alternative, we compared adaptive wagering against fixed wagers that underestimated, matched, or overestimated the true ARR. An oracle full-Kelly row is included only as a simulation benchmark. Type I error and power are reported in Tables~\ref{tab:wager_policy_type1} and~\ref{tab:wager_policy_v8}.

\begin{table}[htbp]
\centering
\caption{Type I error for adaptive and design-fixed wager policies at the same enrolled-patient sample sizes. Each row summarizes 5{,}000 simulated null trials with $p_C = 0.40$ and $\alpha = 0.05$. Sample sizes were obtained from the usual fixed-sample two-proportion calculation with 80\% power; no event-only inflation was used for e-RTe in this comparison.}
\label{tab:wager_policy_type1}
\begin{tabular}{@{}lllrrrr@{}}
\toprule
Endpoint & Scenario & Policy & Wager ARR & $N$ & Events & Type I \\
\midrule
e-RTb & Null 5.0pp design & Adaptive & -- & 2{,}942 & -- & 3.5\% \\
e-RTb & Null 5.0pp design & Fixed 5pp & 5.0pp & 2{,}942 & -- & 3.7\% \\
e-RTb & Null 5.0pp design & Fixed 10pp & 10.0pp & 2{,}942 & -- & 4.5\% \\
e-RTb & Null 10.0pp design & Adaptive & -- & 712 & -- & 2.0\% \\
e-RTb & Null 10.0pp design & Fixed 5pp & 5.0pp & 712 & -- & 0.3\% \\
e-RTb & Null 10.0pp design & Fixed 10pp & 10.0pp & 712 & -- & 3.3\% \\
e-RTe & Null 5.0pp design & Adaptive & -- & 2{,}942 & 1{,}177 & 3.2\% \\
e-RTe & Null 5.0pp design & Fixed 5pp & 5.0pp & 2{,}942 & 1{,}177 & 2.7\% \\
e-RTe & Null 5.0pp design & Fixed 10pp & 10.0pp & 2{,}942 & 1{,}177 & 4.8\% \\
e-RTe & Null 10.0pp design & Adaptive & -- & 712 & 285 & 1.7\% \\
e-RTe & Null 10.0pp design & Fixed 5pp & 5.0pp & 712 & 285 & 0.1\% \\
e-RTe & Null 10.0pp design & Fixed 10pp & 10.0pp & 712 & 285 & 2.0\% \\
\bottomrule
\end{tabular}
\end{table}

\begin{table}[htbp]
\centering
\caption{Power for adaptive and design-fixed wager policies at the same enrolled-patient sample sizes. Each row summarizes 5{,}000 simulated trials with $p_C = 0.40$ and $\alpha = 0.05$. Fixed policies use full-Kelly wagers calibrated to the listed wager ARR; adaptive e-RTb uses half-Kelly and adaptive e-RTe uses full-Kelly.}
\label{tab:wager_policy_v8}
\begin{tabular}{@{}lllrrrrr@{}}
\toprule
Endpoint & Scenario & Policy & Wager ARR & $N$ & Events & Power & Median crossing \\
\midrule
e-RTb & True 5.0pp & Adaptive & -- & 2{,}942 & -- & 49.3\% & 1{,}452 \\
e-RTb & True 5.0pp & Fixed under & 2.5pp & 2{,}942 & -- & 53.7\% & 2{,}154 \\
e-RTb & True 5.0pp & Fixed matched & 5.0pp & 2{,}942 & -- & 75.0\% & 1{,}438 \\
e-RTb & True 5.0pp & Fixed over & 10.0pp & 2{,}942 & -- & 55.7\% & 824 \\
e-RTb & True 5.0pp & Oracle & 5.0pp & 2{,}942 & -- & 73.9\% & 1{,}453 \\
e-RTb & True 10.0pp & Adaptive & -- & 712 & -- & 50.5\% & 396 \\
e-RTb & True 10.0pp & Fixed under & 5.0pp & 712 & -- & 41.9\% & 565 \\
e-RTb & True 10.0pp & Fixed matched & 10.0pp & 712 & -- & 71.3\% & 404 \\
e-RTb & True 10.0pp & Fixed over & 15.0pp & 712 & -- & 67.1\% & 327 \\
e-RTb & True 10.0pp & Oracle & 10.0pp & 712 & -- & 71.4\% & 409 \\
e-RTe & True 5.0pp & Adaptive & -- & 2{,}942 & 1{,}104 & 31.5\% & 556 \\
e-RTe & True 5.0pp & Fixed under & 2.5pp & 2{,}942 & 1{,}104 & 14.2\% & 936 \\
e-RTe & True 5.0pp & Fixed matched & 5.0pp & 2{,}942 & 1{,}104 & 51.2\% & 695 \\
e-RTe & True 5.0pp & Fixed over & 10.0pp & 2{,}942 & 1{,}104 & 46.9\% & 404 \\
e-RTe & True 5.0pp & Oracle & 5.0pp & 2{,}942 & 1{,}104 & 52.2\% & 686 \\
e-RTe & True 10.0pp & Adaptive & -- & 712 & 250 & 33.8\% & 158 \\
e-RTe & True 10.0pp & Fixed under & 5.0pp & 712 & 250 & 5.7\% & 227 \\
e-RTe & True 10.0pp & Fixed matched & 10.0pp & 712 & 250 & 43.2\% & 182 \\
e-RTe & True 10.0pp & Fixed over & 15.0pp & 712 & 250 & 50.6\% & 145 \\
e-RTe & True 10.0pp & Oracle & 10.0pp & 712 & 250 & 42.4\% & 181 \\
\bottomrule
\end{tabular}
\end{table}

The adaptive null rows in Table~\ref{tab:wager_policy_type1} are now directly comparable in precision to the 5{,}000-replicate adaptive e-RTb baseline in Table~\ref{tab:simulations}. At the 10pp design sample size, the adaptive e-RTb null estimate is 2.0\% here and 2.1\% in the baseline run, supporting the interpretation that the lower Type I error in the shorter 10pp-design trial reflects the operating scenario rather than a separate setup difference.

Design-fixed full-Kelly wagers increased power when the design effect was close to the truth while retaining Type I error control in these simulations. Misspecification mattered: underestimating the effect was usually conservative, while overestimating it produced earlier crossings among successful trials but sometimes lower overall power, especially for e-RTe at a true 5pp ARR. Thus design-fixed wagers are useful as optional efficiency tools rather than replacements for adaptive effect-size-agnostic monitoring. Figures~\ref{fig:wager_policy_type1}--\ref{fig:wager_policy_traces} show the corresponding operating characteristics and representative wealth paths.

\begin{figure}[htbp]
\centering
\includegraphics[width=0.9\textwidth]{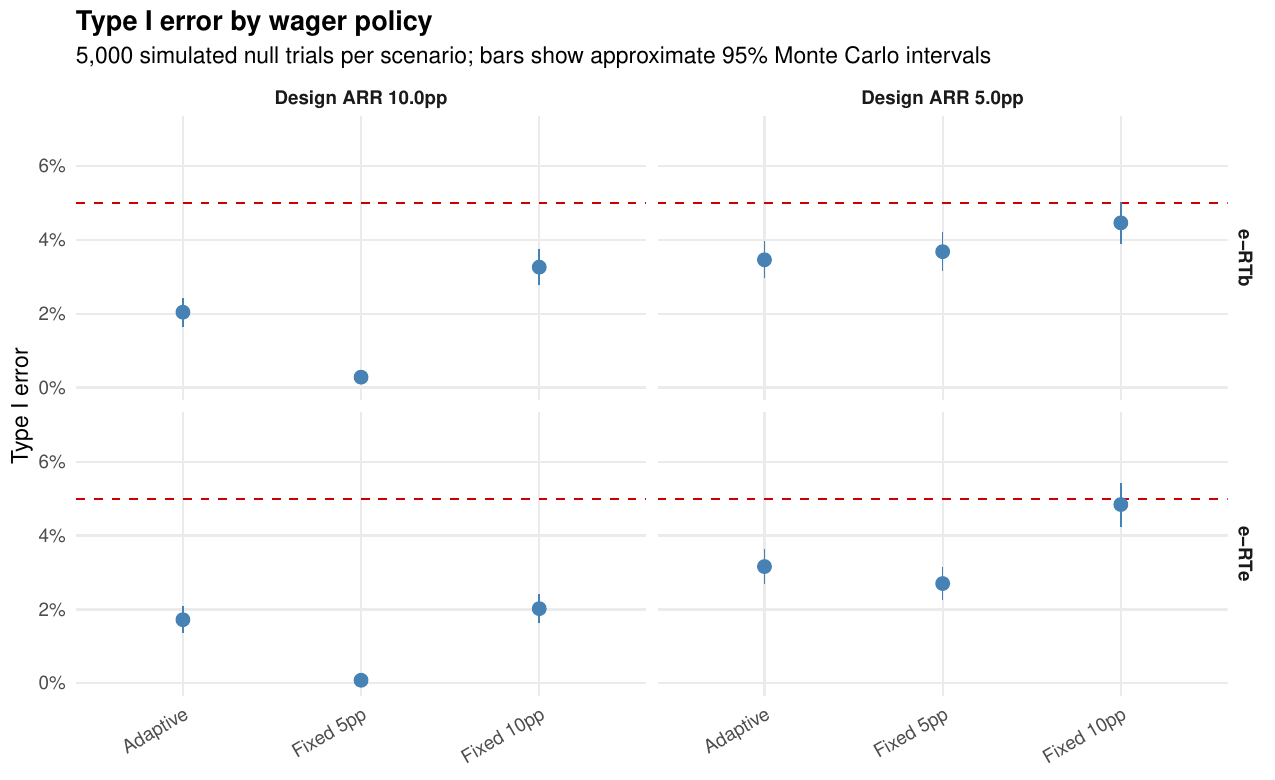}
\caption{Type I error for adaptive and design-fixed wager policies. Each point summarizes 5{,}000 null simulations. The dashed red line is the nominal $\alpha = 0.05$ threshold.}
\label{fig:wager_policy_type1}
\end{figure}

\begin{figure}[htbp]
\centering
\includegraphics[width=0.9\textwidth]{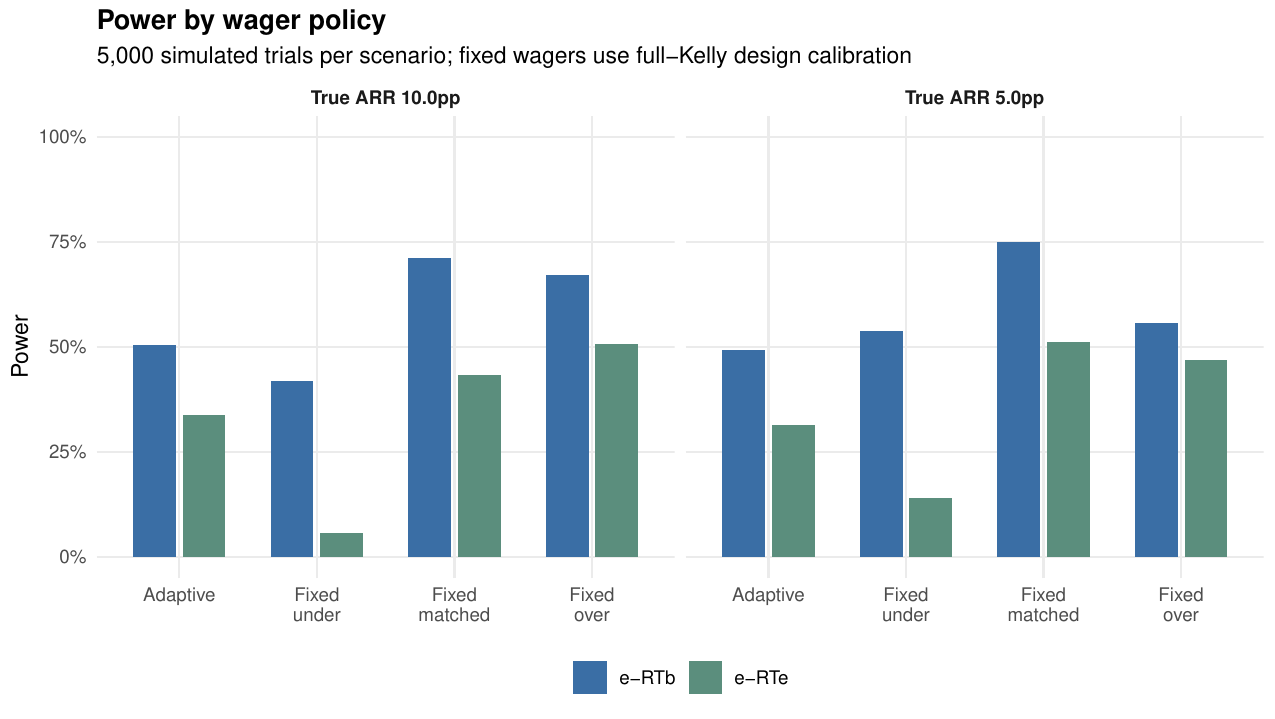}
\caption{Power for adaptive and design-fixed wager policies. Fixed matched wagers are calibrated to the true ARR; underestimated and overestimated wagers deliberately misspecify the design effect.}
\label{fig:wager_policy_power}
\end{figure}

\begin{figure}[htbp]
\centering
\includegraphics[width=0.98\textwidth]{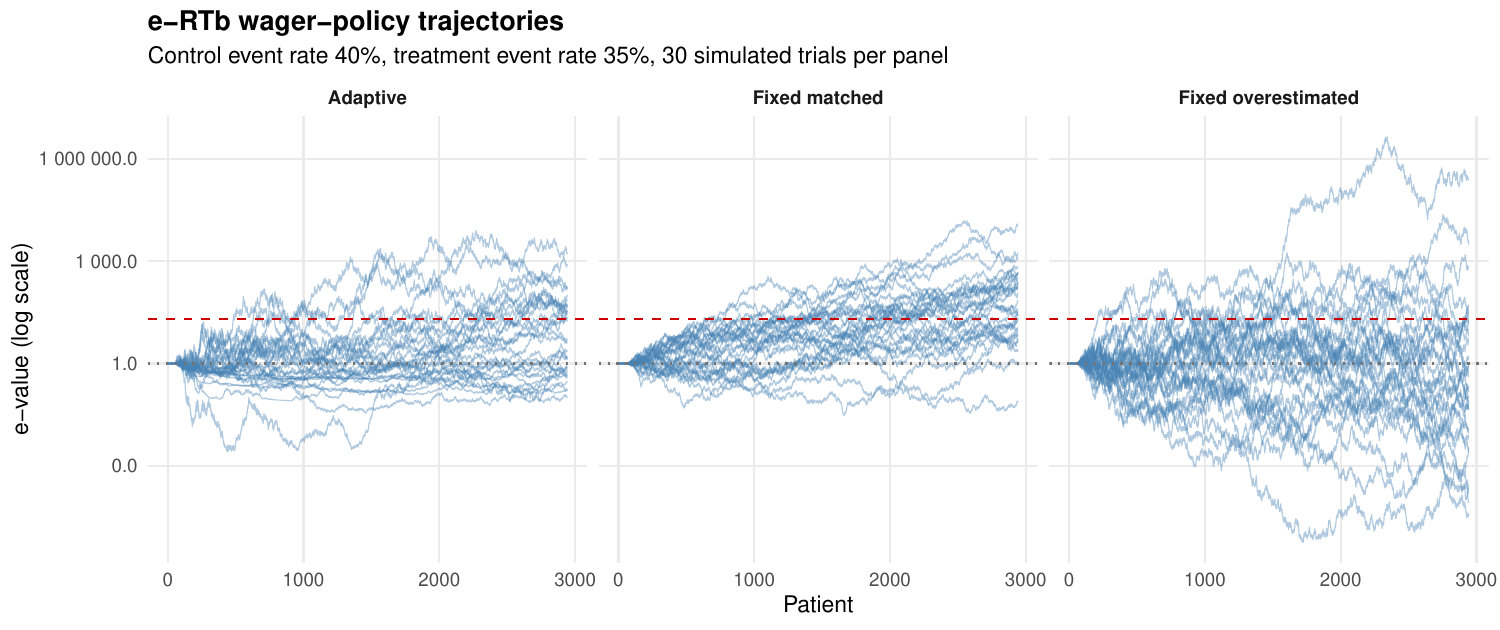}
\includegraphics[width=0.98\textwidth]{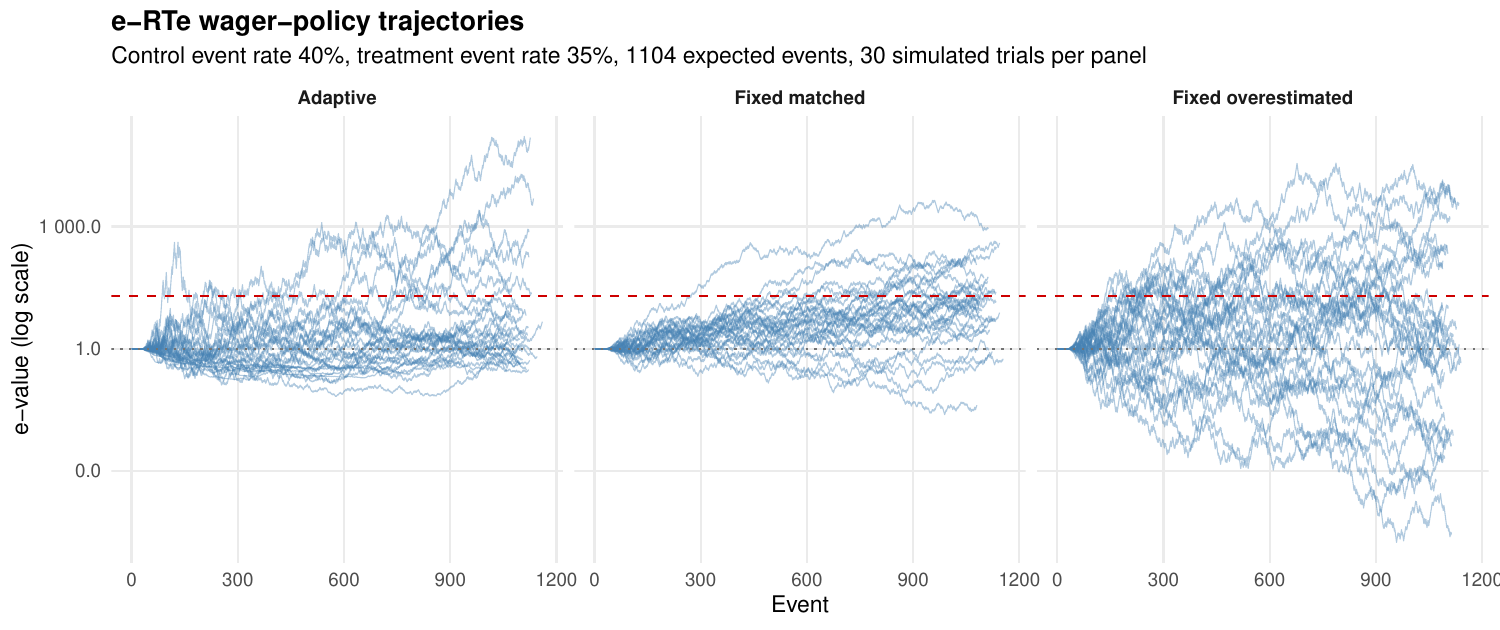}
\caption{Example wealth trajectories under a true 5pp ARR with control event rate 40\%. Top: e-RTb using all enrolled patients. Bottom: e-RTe using only events. Each panel shows 30 simulated trials.}
\label{fig:wager_policy_traces}
\end{figure}

\section{Effect Estimates at Crossing and Type M Error}

Because e-RT methods may stop early, the apparent treatment effect at the first threshold crossing is expected to be inflated. Following the design-analysis terminology of \citet{gelman2014beyond}, this is a form of Type M error: the magnitude of the observed effect among trials that cross may exceed the true effect because crossings are enriched for favorable random fluctuations. This does not invalidate the e-value---the e-process itself remains anytime-valid---but it matters for clinical interpretation. A trial stopped at an e-value threshold should not interpret the naive effect estimate at crossing as an unbiased estimate of the final treatment effect.

We therefore summarized the apparent effect at first crossing among simulated trials that crossed. For both e-RTb and e-RTe, the displayed effect scale is the apparent absolute risk reduction using all randomized patients observed by the time of first crossing. For e-RTe, this is a diagnostic snapshot rather than information used by the event-only e-process: the e-RTe wealth process itself still sees only event arm labels, but when denominators are available at the crossing time, the clinical effect size can be estimated on the same scale as e-RTb. If denominators are not available operationally, only the native event-coin diagnostic can be reported. Table~\ref{tab:type_m_crossing} reports the binary and event-only crossing diagnostics; analogous scale-specific Type M diagnostics are reported below for e-RTc.

\begin{table}[htbp]
\centering
\small
\caption{Type M error at first e-process crossing for e-RTb and e-RTe. Crossing and final effects are absolute risk reductions in percentage points. For e-RTe, the e-process itself remains event-only; the absolute risk reduction is a full-data diagnostic computed from all randomized patients observed by the crossing time. Type M is computed among trials that crossed.}
\begin{tabular}{@{}lllrrrrr@{}}
\toprule
True & Endpoint & Policy & Crossing & Final & Median M & Q75 & Q90 \\
\midrule
5.0pp & e-RTb & Adaptive & 7.92 & 5.00 & 1.58 & 2.12 & 2.95 \\
5.0pp & e-RTb & Fixed matched & 6.52 & 5.01 & 1.30 & 1.69 & 2.15 \\
5.0pp & e-RTe & Adaptive & 8.03 & 4.99 & 1.61 & 2.03 & 2.66 \\
5.0pp & e-RTe & Fixed matched & 6.73 & 4.99 & 1.35 & 1.64 & 2.00 \\
10.0pp & e-RTb & Adaptive & 14.68 & 10.06 & 1.47 & 1.79 & 2.20 \\
10.0pp & e-RTb & Fixed matched & 12.74 & 10.03 & 1.27 & 1.57 & 1.89 \\
10.0pp & e-RTe & Adaptive & 14.63 & 10.09 & 1.46 & 1.76 & 2.11 \\
10.0pp & e-RTe & Fixed matched & 13.15 & 10.02 & 1.32 & 1.58 & 1.83 \\
\bottomrule
\end{tabular}
\label{tab:type_m_crossing}
\end{table}

The inflation is clinically meaningful. In the 5pp ARR scenario, adaptive e-RTb crossings had a median apparent ARR of 7.92 percentage points, with a median Type M ratio of 1.58. Adaptive e-RTe crossings had a similar full-data snapshot ARR of 8.03 percentage points, with a median Type M ratio of 1.61. Fixed matched wagers reduced this inflation for both methods, with median ratios of 1.30 for e-RTb and 1.35 for e-RTe. These results reinforce that e-RT crossings should be interpreted as valid evidence for a treatment difference, not as unbiased estimates of its magnitude.

\subsection{Reporting after a crossing}

At a Data and Safety Monitoring Board (DSMB) review, the e-value should carry the inferential claim and conventional summaries should be labeled as descriptive. A compact crossing report should state the prespecified threshold, the observed e-value, the variant and wager policy, the update at which crossing occurred, the apparent clinical effect at crossing, and the planned final analysis. For example: ``The adaptive e-RTb process crossed the prespecified threshold of 20 at patient 1{,}450, with $W=25.4$. At that time the apparent ARR was 7.8 percentage points. Because this estimate is selected at the first crossing, it is descriptive and may overstate the final treatment effect; the final effect estimate and confidence interval will follow the prespecified primary analysis.'' The protocol should specify whether crossing triggers stopping, DSMB review, or continued monitoring.

\section{Continuous Outcomes}

Continuous endpoints use the same assignment-prediction game as e-RTb. The outcome $Y_i$ is observed before using the assignment $T_i$ in the wealth update; under the null, $Y_i$ does not help predict $T_i$. We call this extension e-RTc.

\subsection{Setup}

At enrollment $i$, treatment assignment is $T_i \in \{0,1\}$ with known allocation probability $p=P(T_i=1)$, and the outcome is $Y_i \in \mathbb{R}$. Under the null hypothesis of no treatment effect,
\begin{equation}
Y_i \perp T_i.
\end{equation}
No distributional assumption on $Y_i$ is needed for validity. Continuous outcomes change only the wager: instead of betting from an event/no-event indicator, e-RTc bets from how unusual the observed value is relative to previous outcomes.

\subsection{Betting strategy for continuous outcomes}

The default adaptive e-RTc wager uses robust standardization and a coarse running direction estimate. Using previous outcomes $Y_1, \dots, Y_{i-1}$, compute:
\begin{align}
m_{i-1} &= \text{median}(Y_1, \dots, Y_{i-1}), \\
s_{i-1} &= \text{MAD}(Y_1, \dots, Y_{i-1}) ,
\end{align}
where MAD is the median absolute deviation,
\begin{equation}
\mathrm{MAD}(Y_1,\ldots,Y_{i-1})
= \operatorname{median}_{1 \leq \ell < i}\left|Y_\ell - m_{i-1}\right|.
\end{equation}
The median and MAD are robust to outliers and skewness. If $s_{i-1}$ is zero or not finite, we set $s_{i-1} = 1$ to avoid degeneracy.

For the new patient, form the standardized residual
\begin{equation}
r_i = \frac{Y_i - m_{i-1}}{s_{i-1}}.
\end{equation}
Then shrink it into $(-1,1)$:
$$
g_i = \frac{r_i}{1 + |r_i|}.
$$
For moderate values $g_i \approx r_i$, while very large residuals are shrunk toward $+1$ or $-1$. This prevents a single extreme observation from forcing an almost all-in bet.

Betting strength is ramped over time:
\begin{equation}
c_i = \min\left\{1, \max\left(0, \frac{i - \text{burn-in}}{\text{ramp}}\right)\right\},
\end{equation}
where \texttt{burn-in} is the number of initial patients with little or no betting and \texttt{ramp} controls the transition to full strength. The default adaptive rule also estimates direction from previous arm means:
\begin{equation}
q_{i-1} = \operatorname{sign}\left(\bar{Y}_{\text{trt},i-1} - \bar{Y}_{\text{ctrl},i-1}\right),
\end{equation}
with $q_{i-1}=0$ if either arm has no prior observations. This makes adaptive e-RTc effect-size agnostic: the running data choose the direction of the bet, but not a prespecified effect magnitude. The magnitude of the wager still depends on the current observation-level score $g_i$ and the ramped betting strength.

The raw betting fraction is
\begin{equation}
\lambda_i^{\mathrm{raw}} = p + c_i \cdot c_{\max} \cdot g_i \cdot q_{i-1},
\end{equation}
then clamped to $(0,1)$ to ensure nonnegative multipliers. It is predictable because it depends only on past outcomes and the new $Y_i$, not on $T_i$. If $Y_i$ is extreme in the direction that past data associate with treatment, $\lambda_i$ moves away from $p$ toward a confident treatment bet. If $Y_i$ is typical, or if the direction has not yet been learned, $\lambda_i$ remains near neutral.

\subsection{Parametric design wager for continuous outcomes}

The adaptive wager above is intentionally agnostic to the effect size. For trials with a credible design alternative, e-RTc can also use a parametric design wager. Suppose the protocol specifies a normal-shift working model,
\[
Y \mid T=0 \sim N(\mu_C^\ast, {\sigma^\ast}^2),
\qquad
Y \mid T=1 \sim N(\mu_T^\ast, {\sigma^\ast}^2).
\]
After observing the new outcome $Y_i$, but before revealing or using $T_i$ in the betting update, the design wager is
\[
\lambda_i^\ast
= \Pr_{\text{design}}(T_i=1 \mid Y_i)
=
\frac{p f_1^\ast(Y_i)}
{p f_1^\ast(Y_i) + (1-p) f_0^\ast(Y_i)},
\]
where $f_1^\ast$ and $f_0^\ast$ are the design densities for treatment and control. As before, we may ramp from the neutral wager $p$ toward $\lambda_i^\ast$ during early enrollment.

This design wager is parametric by construction. The normal-shift model is not needed for validity; it only determines how aggressively the e-process bets. Under the null, treatment assignment remains randomized and independent of $Y_i$, so any predictable $\lambda_i(Y_i)$ gives an expected wealth multiplier of 1. Misspecifying $\mu_C^\ast$, $\mu_T^\ast$, or $\sigma^\ast$ can reduce power or increase Type M error at crossing, but it does not break the martingale guarantee.

\subsection{Wealth update}

Treatment is still randomized with probability $p$ of intervention. After we choose $\lambda_i$, the wealth updates exactly as in the binary approach:
\begin{equation}
W_i = W_{i-1} \times
\begin{cases}
\lambda_i / p & \text{if } T_i = 1 \ (\text{intervention}) \\
(1 - \lambda_i) / (1-p) & \text{if } T_i = 0 \ (\text{control})
\end{cases}
\end{equation}
with $W_0 = 1$.

The binary and continuous variants share this update; they differ only in how $\lambda_i$ is chosen.

\subsection{Worked intuition}

Imagine a 1:1 randomized trial ($p=0.5$) where the outcome is ventilator-free days, and higher is better. Suppose that after 100 patients, the median and MAD of $Y$ are roughly stable, and the treatment arm has had better outcomes so far, so $q_{100}=1$. Patient 101 has an unusually high number of ventilator-free days compared with this distribution. The standardized residual $r_{101}$ is positive and large, so $g_{101} \approx 0.8$ and, after burn-in, $c_{101} \approx 1$. With $c_{\max} = 0.6$, the raw wager is $\lambda_{101}^{\mathrm{raw}} \approx p + 0.6 \times 0.8 \times 1 = 0.98$: we strongly bet that this patient was in the intervention arm. If they indeed were, wealth increases by roughly a factor of $0.98/0.5 \approx 2$ for this one patient. If not, wealth shrinks by $(1-0.98)/0.5 \approx 0.04$.

Under the null, high values like this are just as likely in control as in intervention: we win and lose in balance, and wealth does not grow on average. Under a true benefit, such favorable outliers cluster in the intervention arm, and the bets pay off more often than not.

\subsection{Validity}

The key point is that validity does not depend on the choice of median, MAD, or the specific transformation $g_i$. It depends only on the fact that:
\begin{enumerate}
    \item $\lambda_i$ is chosen \emph{before} observing $T_i$ and depends only on past data and $Y_i$;
    \item under the null, $T_i$ is independent of $Y_i$ with $\mathbb{P}(T_i = 1) = p$.
\end{enumerate}

\begin{theorem}
Under the null hypothesis of no treatment effect, the e-process wealth process $(W_i)$ is a test martingale: for all $i$,
\[
\mathbb{E}[W_i \mid \mathcal{F}_{i-1}] = W_{i-1},
\]
where $\mathcal{F}_{i-1}$ is the sigma-field generated by all observations up to step $i-1$.
\end{theorem}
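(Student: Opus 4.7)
The plan is to mirror the argument used for the binary e-RT, with one extra care step: here the wager $\lambda_i$ is constructed from both past data \emph{and} the current outcome $Y_i$, whereas the treatment indicator $T_i$ is what we are gambling on. So I will condition on the enlarged sigma-field $\mathcal{G}_{i-1} := \sigma(\mathcal{F}_{i-1}, Y_i)$ first, which turns $\lambda_i$ into a deterministic constant, and then take the outer conditional expectation with respect to $\mathcal{F}_{i-1}$.

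First I would verify that $\lambda_i$ is $\mathcal{G}_{i-1}$-measurable. The construction reviews: $m_{i-1}$, $s_{i-1}$ and $\hat d_{i-1}$ depend only on $(Y_1,T_1,\dots,Y_{i-1},T_{i-1})$, so they are $\mathcal{F}_{i-1}$-measurable; $r_i$ and $g_i$ are deterministic functions of those quantities and the new $Y_i$; and $c_i$, $c_{\max}$ are deterministic. Hence $\lambda_i = 0.5 + c_i c_{\max} g_i \hat d_{i-1}$ is $\mathcal{G}_{i-1}$-measurable, and the clipping/squashing ensures $\lambda_i \in (0,1)$, so the two multipliers $\lambda_i/p$ and $(1-\lambda_i)/(1-p)$ are well-defined and nonnegative.

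Next I would invoke the null hypothesis. Randomization gives $T_i \perp (Y_1,T_1,\dots,Y_{i-1},T_{i-1})$ with $\Pr(T_i=1)=p$, and under $H_0$ additionally $T_i \perp Y_i$, so in fact $T_i$ is independent of $\mathcal{G}_{i-1}$ and still Bernoulli$(p)$. Therefore, conditionally on $\mathcal{G}_{i-1}$,
\begin{align*}
\mathbb{E}\!\left[\frac{W_i}{W_{i-1}} \,\Big|\, \mathcal{G}_{i-1}\right]
&= p \cdot \frac{\lambda_i}{p} + (1-p) \cdot \frac{1-\lambda_i}{1-p} = \lambda_i + (1-\lambda_i) = 1.
\end{align*}
Since $W_{i-1}$ is $\mathcal{F}_{i-1} \subseteq \mathcal{G}_{i-1}$-measurable, this yields $\mathbb{E}[W_i \mid \mathcal{G}_{i-1}] = W_{i-1}$.

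Finally, applying the tower property gives $\mathbb{E}[W_i \mid \mathcal{F}_{i-1}] = \mathbb{E}\bigl[\mathbb{E}[W_i \mid \mathcal{G}_{i-1}] \mid \mathcal{F}_{i-1}\bigr] = W_{i-1}$, so in fact equality (not just $\leq$) holds, and $(W_i)$ is a genuine nonnegative martingale starting at $1$. I expect no real obstacle: the only subtle point is recognizing that the dependence of $\lambda_i$ on $Y_i$ is handled by enlarging the conditioning sigma-field, after which the identity is structurally identical to the binary case; note in particular that the specific forms of the median, MAD, squashing $g_i$, and the Cohen's $d$ estimator play no role in the validity argument, only in the power.
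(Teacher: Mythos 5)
Your proposal is correct and follows essentially the same route as the paper's own proof: condition on $\mathcal{F}_{i-1}$ together with $Y_i$ (your $\mathcal{G}_{i-1}$), observe that $\lambda_i$ is then fixed and $T_i$ remains Bernoulli$(p)$ by randomization and the null independence, compute the unit expected multiplier, and integrate out $Y_i$ via the tower property. Your added measurability check and the remark that the specific choices of median, MAD, $g_i$, and $\hat d_{i-1}$ are irrelevant to validity are both consistent with, and slightly more explicit than, the paper's argument.
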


\begin{proof}
Condition on $\mathcal{F}_{i-1}$ and $Y_i$. The bet $\lambda_i$ is now fixed. Under the null, $T_i$ is independent of $Y_i$ and
\[
\mathbb{P}(T_i = 1 \mid \mathcal{F}_{i-1}, Y_i) = p, \quad
\mathbb{P}(T_i = 0 \mid \mathcal{F}_{i-1}, Y_i) = 1-p.
\]
The conditional expectation of the wealth multiplier is:
\begin{align}
\mathbb{E}\!\left[\frac{W_i}{W_{i-1}} \,\middle|\, \mathcal{F}_{i-1}, Y_i\right]
&= p \cdot \frac{\lambda_i}{p} + (1-p) \cdot \frac{1 - \lambda_i}{1-p} \\
&= \lambda_i + (1 - \lambda_i) \\
&= 1.
\end{align}
Thus $\mathbb{E}[W_i \mid \mathcal{F}_{i-1}, Y_i] = W_{i-1}$, and taking expectations over $Y_i$ yields $\mathbb{E}[W_i \mid \mathcal{F}_{i-1}] = W_{i-1}$. This shows that $(W_i)$ is a martingale with unit expectation under the null.
\end{proof}

As in the binary case, Ville's inequality implies that for any stopping time $\tau$,
\[
\mathbb{P}(W_\tau \geq 1/\alpha) \leq \alpha,
\]
so rejecting the null when $W_\tau \geq 1/\alpha$ controls Type I error at level $\alpha$, regardless of the stopping rule.

\subsection{Simulation overview}

We evaluated e-RTc using the same design philosophy as the binary and event-only simulations. For a given standardized effect size (Cohen's $d$), we first computed the fixed-sample size required for a two-sample $t$-test with 80\% power at $\alpha = 0.05$. We then simulated 1{,}000 trials per scenario with 1:1 randomization and normally distributed outcomes with common standard deviation 1. The adaptive e-RTc used the V7 sign-direction wager with burn-in $=20$, ramp $=50$, and $c_{\max}=0.6$. The design e-RTc used the normal-shift design wager above, with matched, underestimated, and overestimated design effects. Table~\ref{tab:ertc_design_type1} reports the null simulations, and Table~\ref{tab:ertc_design_power_type_m} reports power and Type M behavior under alternatives.

\begin{table}[htbp]
\centering
\caption{Type I error for e-RTc adaptive and parametric design wagers. Each row summarizes 1{,}000 simulated null trials. Sample sizes use the usual fixed-sample two-sample $t$-test with 80\% power and $\alpha = 0.05$.}
\label{tab:ertc_design_type1}
\begin{tabular}{@{}llrrr@{}}
\toprule
Design $d$ & Policy & $N$ & Type I & Median final e-value \\
\midrule
0.20 & Adaptive & 788 & 3.8\% & 0.00 \\
0.20 & Design & 788 & 3.0\% & 0.03 \\
0.40 & Adaptive & 200 & 4.3\% & 0.00 \\
0.40 & Design & 200 & 2.9\% & 0.05 \\
0.60 & Adaptive & 90 & 4.0\% & 0.00 \\
0.60 & Design & 90 & 1.2\% & 0.21 \\
\bottomrule
\end{tabular}
\end{table}

\begin{table}[htbp]
\centering
\caption{Power and Type M error for e-RTc adaptive and parametric design wagers. The effect scale is Cohen's $d$; Type M is computed among trials that crossed.}
\label{tab:ertc_design_power_type_m}
\begin{tabular}{@{}lllrrrrr@{}}
\toprule
True $d$ & Policy & Wager $d$ & $N$ & Power & Median crossing & Crossing $d$ & Median M \\
\midrule
0.20 & Adaptive & -- & 788 & 9.8\% & 64 & 0.60 & 2.98 \\
0.20 & Design under & 0.10 & 788 & 52.1\% & 587 & 0.27 & 1.33 \\
0.20 & Design matched & 0.20 & 788 & 73.4\% & 401 & 0.26 & 1.28 \\
0.20 & Design over & 0.40 & 788 & 59.2\% & 252 & 0.32 & 1.58 \\
0.40 & Adaptive & -- & 200 & 31.6\% & 67 & 0.67 & 1.66 \\
0.40 & Design under & 0.20 & 200 & 34.4\% & 168 & 0.56 & 1.39 \\
0.40 & Design matched & 0.40 & 200 & 66.6\% & 130 & 0.51 & 1.29 \\
0.40 & Design over & 0.80 & 200 & 61.1\% & 93 & 0.55 & 1.37 \\
0.60 & Adaptive & -- & 90 & 53.8\% & 62 & 0.77 & 1.28 \\
0.60 & Design under & 0.30 & 90 & 5.7\% & 85 & 0.96 & 1.60 \\
0.60 & Design matched & 0.60 & 90 & 44.7\% & 77 & 0.78 & 1.29 \\
0.60 & Design over & 1.20 & 90 & 55.5\% & 67 & 0.75 & 1.25 \\
\bottomrule
\end{tabular}
\end{table}

Type I error remained controlled in these simulations. The adaptive sign-direction wager is conservative for small effects: at true $d=0.20$, it crossed in only 9.8\% of trials and did so very early when it crossed, producing a high median Type M ratio of 2.98. The parametric design wager improved power substantially when the design effect was close to the truth: at $d=0.20$, matched design wagering increased power to 73.4\% and reduced median Type M to 1.28. The tradeoff is visible under misspecification. Overestimating the design effect generally caused earlier crossings and more Type M inflation, while underestimating the effect delayed crossings and sometimes reduced power. These results support design-calibrated e-RTc as an optional efficiency mode, not as a replacement for the adaptive effect-size-agnostic monitor.

\begin{figure}[htbp]
\centering
\includegraphics[width=0.48\textwidth]{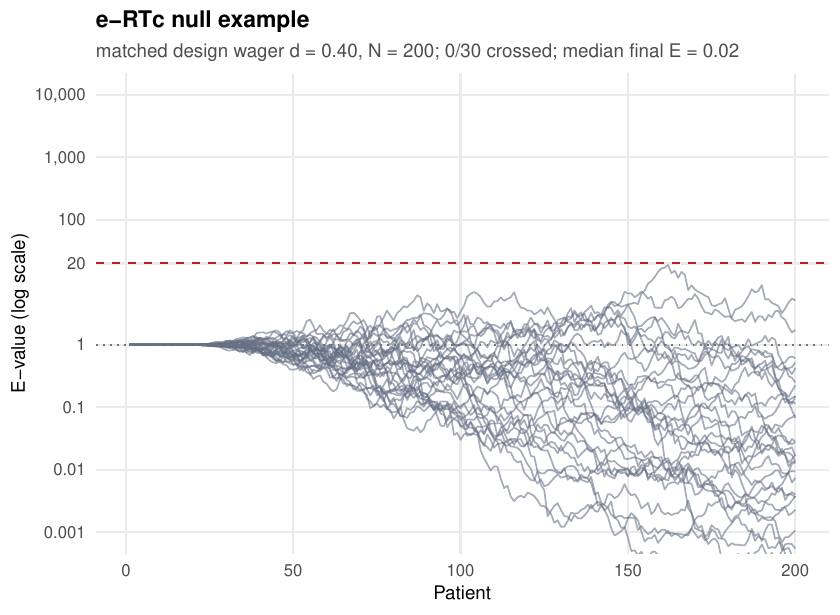}
\includegraphics[width=0.48\textwidth]{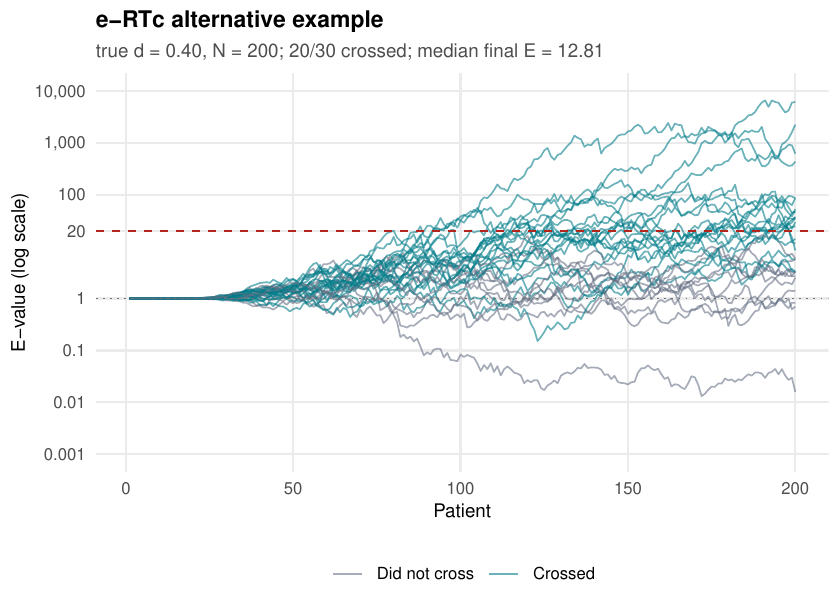}
\caption{
Trajectories of the continuous randomization e-process using the matched normal-shift design wager for a trial designed to detect a standardized mean difference of $d = 0.40$ with 80\% power. Left: trajectories under the null hypothesis ($d = 0$), where wealth usually drifts below 1 and does not cross in the representative panel. Right: trajectories under the alternative hypothesis ($d = 0.40$), where wealth grows systematically and many paths cross the rejection threshold before the planned sample size is reached.
}
\label{fig:eRTC_null_d04_80}
\end{figure}

\section{Betting Strategy Design Across Endpoints}

The active e-RT variants use different betting strategies. This is not a validity distinction: all variants rely on the same predictable-martingale argument. The difference is an efficiency distinction. A wager that is reasonable for one endpoint may be inefficient for another because the wealth product is updated on a different scale.

\subsection{Update Density and Over-Betting}

The wealth process is a product of multipliers. When a wager is too aggressive for the true effect, unfavorable multipliers are compounded repeatedly. The practical cost depends on update density: event-only methods update only when events occur, whereas binary and continuous endpoints update for every patient. This does not change Type I error control, but it can strongly affect power and the distribution of final e-values.

\subsection{Sparse-Update Endpoint: e-RTe}

The event-only e-RTe updates sparsely: the update stream is the sequence of observed events. This permits a more aggressive default than e-RTb, and the event-coin parameter is naturally bounded between 0 and 1. The tuning simulations in Table~\ref{tab:erte_tuning_sensitivity} show, however, that sparse updating creates its own design problem. If the burn-in and ramp consume much of the planned event stream, e-RTe may not begin meaningful betting soon enough. Thus e-RTe can use full-Kelly intensity, but its burn-in and ramp should be chosen with the expected number of events in mind.

\subsection{Dense-Update Endpoints: e-RTb and e-RTc}

For e-RTb, wealth updates for every randomized patient. This makes naive over-betting costly. Table~\ref{tab:wage_binary} isolates this issue using fixed deviations of the assignment-prediction wager from 0.5 under a true 5pp ARR: for events the wager is $0.5-m$ and for non-events it is $0.5+m$. Here $m$ is a betting magnitude, not an ARR. A 0.05 deviation crossed in 57.3\% of simulations, whereas twofold and threefold larger deviations crossed in only 24.3\% and 13.2\% of simulations, with median final e-values effectively zero. This stress test is intentionally narrow: it does not say that all fixed wagers are poor. Rather, it shows why the adaptive binary default uses half-Kelly shrinkage and why design-fixed binary wagers should be calibrated to clinically plausible effects.

For e-RTc, the update density is also every patient, and the wager additionally varies with an observation-level score. The adaptive sign-direction rule therefore remains conservative, especially for small effects. Table~\ref{tab:ertc_design_power_type_m} shows that at true $d=0.20$, adaptive e-RTc crossed in only 9.8\% of simulations and had substantial Type M inflation among crossings. A matched normal-shift design wager improved power to 73.4\% and reduced median Type M to 1.28, but this efficiency depends on the design model being credible. Continuous endpoints therefore emphasize the same separation seen elsewhere: validity is supplied by randomization and predictability; efficiency depends on the wager.

\subsection{Design Principle}

Table~\ref{tab:wage_summary} summarizes the practical hierarchy across active variants. Sparse-update methods tolerate more aggressive wagers because there are fewer opportunities for over-betting to compound. Dense-update methods require more shrinkage or stronger design justification. This is a statement about power and stability, not validity: any predictable wager remains valid under the null, but poorly calibrated wagers may spend their wealth in the wrong places.

The stress table below gives the binary example underlying this principle.

\begin{table}[htbp]
\centering
\caption{Binary: cost of naive over-betting (true ARR = 5\%, $N = 2{,}942$). Fixed wager magnitudes are deviations from 0.5 in the assignment-prediction wager, not ARR design effects. Rows summarize 1{,}000 fixed-seed simulated trials.}
\begin{tabular}{@{}cccc@{}}
\toprule
Fixed wager magnitude & Scale vs 0.05 wager & Power & Median Final E-value \\
\midrule
0.05 & $1\times$ & 57.3\% & 1.31 \\
0.10 & $2\times$ & 24.3\% & $\approx 0$ \\
0.15 & $3\times$ & 13.2\% & $\approx 0$ \\
0.20 & $4\times$ & 11.6\% & $\approx 0$ \\
\bottomrule
\end{tabular}
\label{tab:wage_binary}
\end{table}

The summary table below should be read as design guidance rather than a theorem.

\begin{table}[htbp]
\centering
\small
\caption{Summary: wager-policy design considerations by e-RT variant.}
\begin{tabular}{@{}lccc@{}}
\toprule
Property & e-RTe & e-RTb & e-RTc \\
\midrule
Update unit & Event & Patient & Patient \\
Default policy & Full Kelly & Half Kelly & Sign direction \\
Design-wager role & Event-rate planning & ARR planning & Normal-shift model \\
Over-betting sensitivity & Moderate & High & High \\
Main tuning issue & Event-stream length & Kelly shrinkage & Scale/model choice \\
\bottomrule
\end{tabular}
\label{tab:wage_summary}
\end{table}

\subsection{Protocol prespecification checklist}

For protocol use, the monitoring plan should make the wager policy as explicit as the stopping boundary. Table~\ref{tab:prespec_checklist} gives a compact checklist.

\begin{table}[htbp]
\centering
\small
\caption{Items to prespecify before using an e-RT monitor.}
\label{tab:prespec_checklist}
\begin{tabular}{p{0.28\textwidth}p{0.62\textwidth}}
\toprule
Item & Prespecification target \\
\midrule
Endpoint and variant & State whether monitoring uses e-RTb, e-RTe, or e-RTc, and define the update unit. \\
Validity threshold & Choose $\alpha$ and the e-value threshold $1/\alpha$; for $\alpha=0.05$, the threshold is 20. \\
Wager policy & Specify adaptive, fixed, design-calibrated, or other predictable policy; record all tuning constants. \\
Burn-in and ramp & Define burn-in and ramp on the natural update scale: patients or events. \\
Design alternative & If using a design wager, state the ARR, event-coin tilt, Cohen's $d$, and any working model parameters. \\
Stopping action & State whether crossing triggers automatic stopping, DSMB review, unblinded review, or continued monitoring. \\
Crossing report & Predefine the descriptive effect scale reported at crossing and note that selected estimates may be inflated. \\
Non-crossing trial & State that the trial proceeds to the planned primary analysis if the e-process does not cross. \\
Multiplicity & Specify how multiple endpoints, arms, or looks across variants will be combined or adjusted. \\
\bottomrule
\end{tabular}
\end{table}

\section{Discussion}

The e-RT family comprises nonparametric sequential tests for randomized trials based on the betting framework for e-values (i-bet \cite{pmlr-v177-duan22a}). All active variants require only that treatment assignment is randomized---no distributional assumptions about outcomes are needed for validity. This makes them robust complements to model-based analyses. This manuscript focuses on binary outcomes (e-RTb), event-only monitoring (e-RTe), and continuous endpoints (e-RTc).

\subsection{Operating characteristics}

Across the simulation-validated variants, simulations demonstrate proper Type I error control, confirming the theoretical guarantee from the martingale property. Power varies by variant and scenario: for binary outcomes, approximately 50\% power for early stopping in trials designed with 80\% frequentist power, and 63--66\% in trials designed with 90\% power. The event-only variant (e-RTe) trades information for operational simplicity because non-events are not used by the e-process. The e-RTc simulations show a wager-policy tradeoff on the Cohen's $d$ scale: a matched parametric design wager can be much more powerful than the agnostic adaptive wager, while misspecification changes both power and Type M error at crossing.

When early stopping occurs across all variants, it typically happens at approximately 45--56\% of the planned sample size or event count.

These results should be interpreted as evidence for a monitoring role rather than as a replacement for the planned primary analysis. If the e-process crosses its threshold, the protocol may trigger stopping or DSMB review. If it does not cross, the trial proceeds to its planned conclusion and primary analysis.

\subsection{Traditional statistics at crossing}

When an e-RT crosses its threshold ($\geq 1/\alpha$), conventional p-values and confidence intervals computed at that same moment are selected summaries: the trial has stopped at a favorable time. They may be useful clinical descriptors, and in simulations they usually moved in the same direction as the e-process, but they should not carry the inferential claim. The e-value carries the anytime-valid evidence; traditional statistics at crossing are diagnostic and descriptive.

\subsection{What is the null hypothesis being tested?}

This approach tests whether treatment assignment can be predicted from outcomes---equivalently, whether outcomes are exchangeable between arms. Under the null, $Y_i \perp T_i$ at each observation: knowing the outcome provides no information about which arm the patient belongs to. This is neither Fisher's sharp null (every individual has exactly zero treatment effect) nor the weak null of equal population means.

Rejecting the null means outcomes predict assignments better than randomization alone would allow. This framing clarifies both the method's strength and its limitation. The strength is generality: constant effects, heterogeneous effects, and some randomization failures can all make outcomes informative. The limitation is that the current adaptive wager detects departures only when they are stable enough for a cumulative backward-looking strategy to exploit. Non-stationary effects and abrupt direction reversals are therefore blind spots for the current cumulative wager.

\subsection{Relationship to existing work}

The betting framework for hypothesis testing was developed by \citet{shafer2021}. E-values and e-processes have been extensively studied \citep{vovk2021, ramdas2022, ramdas2025}. \citet{pmlr-v177-duan22a} introduced interactive rank testing by betting (i-bet), which applies the betting framework directly to randomized experiments: an analyst sequentially bets on treatment assignments based on observed outcomes, with wealth forming a test martingale under the null. The binary approach implements this framework with a specific adaptive betting strategy tied to outcome values. Betting approaches have been established for estimating means of bounded random variables \citep{waudbysmith2023estimating}. The continuous extension adapts these principles to the two-sample randomization setting using a standardization strategy.

\citet{sokolova2026evalues} and the accompanying \texttt{evalinger} implementation \citep{evalinger2026} are especially important comparators for the present work. The first draft of e-RT was dated December 4, 2025, so the randomization-betting construction developed independently of that manuscript. This chronology is noted only to clarify the origin of e-RT, not to diminish the importance of their contribution. Their work provides a mature design-calibrated perspective on clinical-trial e-processes, particularly through growth-rate-optimal (GROW) wagers chosen from prespecified alternatives. The present work uses their contribution as a central point of comparison: e-RT is effect-size agnostic by default, whereas GROW-style design wagers encode a planned alternative to improve expected evidence growth. The two views are compatible because randomization supplies the validity engine and the wager policy supplies the efficiency profile. In this manuscript, design-calibrated wagers are therefore presented as optional efficiency tools within e-RT, not as replacements for the adaptive agnostic monitor.

\citet{koning2025} develops e-values for group invariance, including permutation tests, using batch-based likelihood ratio statistics normalized by permutation expectations.

The e-RTb shares the same martingale foundation as i-bet but differs in key respects: it operates prospectively as patients enroll rather than retrospectively on completed data; it requires no covariates or working models; and it uses betting fractions that adapt continuously to running outcome estimates rather than fixed-magnitude wagers guided by covariate-based predictions. This yields a simpler method that may be suitable for real-time trial monitoring.

\subsection{Limitations}

Several limitations should be noted. This is an experimental method under development. Simulations are not exhaustive, and the operating characteristics reported are specific to the scenarios tested.

The methods test only whether there are differences between arms; they do not provide point estimates or confidence intervals. The adaptive learning of $\hat{\delta}$ requires a burn-in period during which little evidence accumulates. For trials where parametric assumptions are plausible, model-based sequential methods will generally have better power. Our simulations used specific betting strategies; other choices may yield different operating characteristics. The betting strategy design section provides guidance on strategy selection, but optimal calibration for specific clinical scenarios remains an open question. Finally, it is unclear how these methods will behave when heterogeneity in treatment effects exists or there are temporal instabilities in effect size.

\subsection{When not to use e-RT}

The methods are not intended to replace every sequential design. They are a poor primary choice when a strong parametric model is credible and maximum model-based efficiency is the main goal; when the primary output must be an unbiased point estimate or confidence interval rather than evidence against exchangeability; when the effect is expected to reverse direction during enrollment; when informative outcome observation dominates the endpoint; or when several endpoints will be monitored without a multiplicity or e-value-combination plan. In these settings, e-RT may still be useful as a sensitivity monitor, but it should not be the only design justification without dedicated simulation.

\subsection{Conclusion}

E-processes provide anytime-valid sequential inference for randomized trials using only the guarantee of randomization. The e-RT family separates this validity engine from the wager policy used for efficiency. Its default contribution is effect-size-agnostic monitoring: a trial can be monitored continuously without specifying the treatment effect the e-process is trying to exploit. Design-calibrated wagers, including GROW-style wagers, can improve power when the design alternative is credible, but they are optional efficiency tools rather than validity requirements. This makes e-RT a conservative, transparent complement to traditional trial monitoring rather than a replacement for the final model-based analysis.

\section{Disclaimers and Version Control}

\subsection{Disclaimer}

This is an experimental method under development. Application to real patients should only be considered under surveillance from an experienced statistician and remain strongly discouraged at this point by the author. The author is not responsible for consequences of use of this method.

\subsection{LLM use statement}

Large language models were extensively used in this work. Before using LLMs, the author formulated the randomization-betting construction underlying e-RT. The author then uploaded the references in this manuscript to Gemini 3.0 Pro for brainstorming and drafting support. Subsequent versions were refined, tested, and debugged using Claude 4.5 Opus and ChatGPT 5.1 Pro. OpenAI Codex aided the Version 8 and Version 9 repository organization, simulation refactoring, wager-policy comparisons, e-RTc design-wager implementation, and manuscript cleanup.

\subsection{Acknowledgments}

The author is thankful to Aaditya Ramdas for their thoughtful comments on the first version and for pointing out previous literature to the author.

\subsection{Code and Data Availability}

The manuscript source, R implementation files, simulation scripts, generated CSV result tables, manuscript tables, and figures are maintained in the project repository: \url{https://github.com/fzampier/erandtest}. This repository is the source of truth for the computational appendix, rather than duplicating code inside the manuscript.

\subsection{Version Control}
\begin{enumerate}
    \item First Version (Dec 04, 2025)
    \item Second Version (Dec 07, 2025): Minor text adjustments; removed claim on sharp null.
    \item Third Version (Dec 11, 2025): Added continuous endpoint material; text adjustments.
    \item Fourth Version (Dec 17, 2025): Correction on signal direction on e-RTC. Text adjustments.
    \item Fifth Version (Dec 31, 2025): Added multi-state extension (e-RTms).
    \item Sixth Version (Feb 15, 2026): Added deaths-only monitoring (e-RTd). Added betting strategy design section explaining wager asymmetry across variants. Added traditional statistics at crossing discussion. Updated abstract and introduction to cover all five variants. Reordered sections.
    \item Seventh Version (Mar 08, 2026): Renamed binary e-RT to e-RTb after its introduction as the prototype. Generalized e-RTd (deaths-only) to e-RTe (event-only), broadening applicability beyond mortality. Added e-RTu (universal) section describing a domain-agnostic betting engine abstraction (under development). Updated all cross-references and discussion to reflect six variants.
    \item Eighth Version (Apr 30, 2026): Separated randomization validity from wager policy; added adaptive, design-fixed, misspecified-design, and oracle wager simulations for e-RTb/e-RTe; added Type M error at crossing diagnostics; added parametric normal-shift design wagers for e-RTc; committed simulation result tables for reproducibility; deferred earlier prototype variants from the active manuscript scope; and removed the embedded code appendix in favor of the project repository.
    \item Ninth Version (May 06, 2026): Refocused the active manuscript on e-RTb, e-RTe, and e-RTc; removed the fourth endpoint family from the active manuscript, reproducibility chain, generated artifacts, and documentation; removed parked exploratory material from the public repository; and carried forward the conditional-exchangeability clarification for event-only monitoring.
\end{enumerate}

% ============================================
% REFERENCES
% ============================================
\bibliographystyle{apalike}
\bibliography{references}

\end{document}